\documentclass[sn-basic, iicol]{sn-jnl}% Math and Physical Sciences Numbered Reference Style 
%%\documentclass[sn-mathphys-ay]{sn-jnl}% Math and Physical Sciences Author Year Reference Style
%%\documentclass[sn-aps]{sn-jnl}% American Physical Society (APS) Reference Style
%%\documentclass[sn-vancouver,Numbered]{sn-jnl}% Vancouver Reference Style
%%\documentclass[sn-apa]{sn-jnl}% APA Reference Style 
%%\documentclass[sn-chicago]{sn-jnl}% Chicago-based Humanities Reference Style

%%%% Standard Packages
%%<additional latex packages if required can be included here>

\usepackage{graphicx}%
\usepackage{multirow}%
\usepackage{amsmath,amssymb,amsfonts}%
\usepackage{amsthm}%
\usepackage{mathrsfs}%
\usepackage{appendix}%
\usepackage{xcolor}%
\usepackage{textcomp}%
\usepackage{manyfoot}%
\usepackage{booktabs}%
\usepackage{algorithm}%
\usepackage{algorithmicx}%
\usepackage{mathtools}%
\usepackage{algpseudocode}%
\usepackage{listings}%
\usepackage{lmodern}%
\counterwithout{figure}{section}
\usepackage{bbm}%
\usepackage{subcaption}
\usepackage{diagbox}
%\bibliographystyle{agsm}%
%%%%

%%%%%=============================================================================%%%%
%%%%  Remarks: This template is provided to aid authors with the preparation
%%%%  of original research articles intended for submission to journals published 
%%%%  by Springer Nature. The guidance has been prepared in partnership with 
%%%%  production teams to conform to Springer Nature technical requirements. 
%%%%  Editorial and presentation requirements differ among journal portfolios and 
%%%%  research disciplines. You may find sections in this template are irrelevant 
%%%%  to your work and are empowered to omit any such section if allowed by the 
%%%%  journal you intend to submit to. The submission guidelines and policies 
%%%%  of the journal take precedence. A detailed User Manual is available in the 
%%%%  template package for technical guidance.
%%%%%=============================================================================%%%%

%% as per the requirement new theorem styles can be included as shown below
\DeclareMathOperator*{\argmax}{arg\,max}

\DeclareMathOperator*{\maxlen}{L_\text{max}}

\newcommand{\defeq}{\vcentcolon=}

\newtheorem{theorem}{Theorem}%  meant for continuous numbers
%%\newtheorem{theorem}{Theorem}[section]% meant for sectionwise numbers
%% optional argument [theorem] produces theorem numbering sequence instead of independent numbers for Proposition
\newtheorem{proposition}[theorem]{Proposition}% 

\newtheorem{definition}{Definition}%

\raggedbottom
%%\unnumbered% uncomment this for unnumbered level heads

\begin{document}

\title[A novel framework for quantifying nominal outlyingness]{A novel framework for quantifying nominal outlyingness}

%%=============================================================%%
%% GivenName	-> \fnm{Joergen W.}
%% Particle	-> \spfx{van der} -> surname prefix
%% FamilyName	-> \sur{Ploeg}
%% Suffix	-> \sfx{IV}
%% \author*[1,2]{\fnm{Joergen W.} \spfx{van der} \sur{Ploeg} 
%%  \sfx{IV}}\email{iauthor@gmail.com}
%%=============================================================%%

\author*[1]{\fnm{Efthymios} \sur{Costa}}\email{efthymios.costa17@imperial.ac.uk}
\equalcont{These authors contributed equally to this work.}

\author[1]{\fnm{Ioanna} \sur{Papatsouma}}\email{i.papatsouma@imperial.ac.uk}
\equalcont{These authors contributed equally to this work.}

\affil*[1]{\orgdiv{Department of Mathematics}, \orgname{Imperial College London}}

%%==================================%%
%% Sample for unstructured abstract %%
%%==================================%%

\abstract{Outlier detection is an important data mining tool that becomes particularly challenging when dealing with nominal data. First and foremost, flagging observations as outlying requires a well-defined notion of nominal outlyingness. This paper presents a definition of nominal outlyingness and introduces a general framework for quantifying outlyingness of nominal data. The proposed framework makes use of ideas from the association rule mining literature and can be used for calculating scores that indicate how outlying a nominal observation is. Methods for determining the involved hyperparameter values are presented and the concepts of variable contributions and outlyingness depth are introduced, in an attempt to enhance interpretability of the results. The proposed framework is evaluated on both synthetic and publicly available data sets, demonstrating comparable performance to state-of-the-art frequent pattern mining algorithms and even outperforming them in certain cases. The ideas presented can serve as a tool for assessing the degree to which an observation differs from the rest of the data, under the assumption of sequences of nominal levels having been generated from a Multinomial distribution with varying event probabilities.}

\keywords{Outlier detection, Nominal data, Association rule mining, Contingency tables, Multinomial distribution}

%%\pacs[JEL Classification]{D8, H51}

%%\pacs[MSC Classification]{35A01, 65L10, 65L12, 65L20, 65L70}
\pacs[MSC Classification]{62H30}

\maketitle

\section{Introduction}\label{sec:intro}

%Outlier detection aims to flag atypical observations in a data set; these are the `outliers' (also called anomalies) and they may be points with values that do not conform to some pattern suggested by the data or observations whose values arouse suspicion regarding the mechanism that has been used to generate them.
Outlier detection (also known as anomaly detection) is a crucial aspect of data analysis that has received increasing attention by data mining researchers. The wide range of uses of outlier detection in various domains, such as in cybersecurity \citep{aggarwal2007data, di2008intrusion}, sociology \citep{savage2014anomaly}, healthcare \citep{tschuchnig2022anomaly}, and finance \citep{ngai2011application}, has driven the development of numerous algorithms designed to detect anomalies within a data set. Most of these algorithms are restricted to just one data type and that is continuous data. However, categorical data, that is binary, ordinal and nominal data \citep{kaufmanrousseeuw2009}, is prevalent in real-world data sets, yet very few outlier detection methods can handle this data type and authors do not always make the distinction between purely nominal and ordinal variables. Outliers in data sets with nominal variables may indicate a serious offence or a critical situation. For instance, unusual combinations of symptoms, diagnoses and demographic characteristics of patients could be indications of rare conditions in healthcare monitoring or uncommon combinations of claim types, policies and accident locations may hint fraud in insurance claims. This demands the development of outlier detection methods tailored for nominal data.

In this paper, we focus on nominal variables, including binary features. For completeness, we present some of the related literature below. Since no standard method exists, we provide a brief overview of the various approaches that have been proposed. For an extensive review of outlier detection algorithms for categorical data, in the sense of both nominal and ordinal data, we refer the reader to \cite{taha2019anomaly}.

%In multivariate outlier detection for continuous data, the main idea is to identify which observations are less likely to occur under the assumption of a model having generated the data. This is described in more detail in \cite{becker1999masking}, who look more specifically at the case of the mutivariate Gaussian distribution. The outlier identification problem is commonly approached by three steps. Firstly, the parameters of the underlying distribution are estimated in a robust way (that is, without being influenced by observations that are likely outlying), then robust distances are computed using the robust estimators and finally, the distances are compared to a cutoff value. This cutoff value is typically the $(1-\alpha)$ quantile of the distribution of the distances, assuming the hypothesised underlying model is true. Examples of robust estimators include the well-celebrated Minimum Covariance Determinant (MCD) \citep{rousseeuw1984least}, the Stahel-Donoho estimator \citep{stahel1981robuste, donoho1982breakdown} or S-estimators \citep{rousseeuw1984robust}, among others. Notice how these assume an elliptical distribution having generated the data, thus allowing for the use of Mahalanobis-type distances in order to detect outliers.

One approach to outlier identification assuming an underlying discrete distribution, is extending ideas from the continuous to the discrete domain. This would typically require robust estimation of the parameters of some assumed underlying distribution, computation of robust distances and finally comparing the distances to a cutoff value \citep[see][for examples of robust estimators]{stahel1981robuste, rousseeuw1984least, rousseeuw1985multivariate}. The cutoff value is typically the $1-\alpha$ quantile of the distribution of the distances, assuming validity of the hypothesised data generating mechanism. In the discrete distribution setting, this is generally a hard task that is not possible unless some model is assumed to have generated the counts in a contingency table. Log-linear models are widely used for this purpose and one can then infer what the expected proportions should be in a robust way \citep[see for example][]{kuhnt2010breakdown, kuhnt2014outlier, calvino2021robustness}. However, these methods can become impractical when the number of nominal variables is very large and the analysis of a high-dimensional contingency table is required. The sparsity of the resulting contingency tables increases together with the dimensionality, which is something that needs to be accounted for as well.

Another notable approach to outlier detection for nominal data is using \textit{proximity-based} methods. These methods seek to detect outliers by generating a distance matrix between observations and finding the subjects that are furthest away from the rest of the points in the data set. Examples of such methods include \cite{bay2003mining} and \cite{li2007mining}. However, distances in the categorical space often lack meaningful interpretation and implicitly make an equidistance assumption. As an example, the Hamming distance treats all mismatches equally. \cite{sripriya2020robust} introduced a distance metric for detecting outliers in categorical data; however, its application is restricted to $2 \times 2$ contingency tables. Furthermore, these methods are very impractical in the presence of a large number of nominal variables, leading to a high computational cost.

\textit{Information-theoretic} methods have also been proposed for outlier identification of nominal data \citep{lee2000information, wu2011information, zhao2014simple}. Such methods rely on some information-theoretic measure of \textit{data disorder}, such as the \textit{entropy}, and typically employ a local search algorithm to optimise an objective function. As a result, these methods rely heavily on heuristics and the optimisation process can be very costly. These problems are also common in recently proposed deep learning methods \citep[for example][]{chen2016, cheng2019neural}.

One different approach to anomaly detection for nominal data is the \textit{rule-based} approach that makes use of ideas from the association rule mining literature \citep{agrawal1993mining}. To the best of our knowledge, the first algorithm that adopted this mindset for dealing with the problem of outlier detection in nominal data was the Link-based Outlier and Anomaly Detection in Evolving Data sets (LOADED) algorithm of \cite{ghoting2004loaded}. In fact, LOADED was developed for identifying anomalies in mixed-type data (that is for a combination of nominal and continuous features). An improvement of this algorithm in terms of computational cost was then presented by \cite{otey2006fast}, with further ameliorations being proposed by \cite{koufakou2010} for the nominal outlier identification part. More precisely, \cite{koufakou2010} introduced the Outlier Detection for Mixed Attribute Datasets (ODMAD) algorithm. The primary concern with the way ODMAD deals with nominal features is that it treats outliers in the nominal domain in a way that disregards the dimension of the subspace in which these are found. Furthermore, the implementation of ODMAD involves certain threshold parameters, the values of which need to be defined by the user, without providing guidelines of how these could be chosen in a data-driven manner.

Similarly, several frequent pattern mining methodologies have been employed for quantifying nominal outlyingness, such as the Frequent Pattern Outlier Factor (FPOF) \citep{he2004frequent} and the Frequent Pattern Isolation (FPI) \citep{kuchar2018spotlighting} algorithms. However, these methods also suffer from a lack of intuition regarding the choice of their associated hyperparameter values and do not take into account that longer sequences are inherently less frequent than their subsets. Moreover, they completely overlook whether an event is rare by nature. For instance, consider a rare disease that affects only a small proportion of the population but is known and not particularly noteworthy. Frequent pattern mining algorithms will most likely flag subjects suffering from that disease as potential outliers solely due to their low frequency within a data set of clinical records, even though these are completely normal and expected cases. This underscores the need for a framework that can incorporate such contextual information and produce results that align with the underlying ground truth.

%The main contributions of this work are as follows: we improve upon existing approaches to the problem of outlier detection in the categorical domain, leveraging key insights that stem from prior research in the field, borrowing ideas from the association rule mining literature and further reducing the amount of required user input throughout the process. In this paper we propose a novel framework for quantifying the outlyingness of nominal data.
The rest of the paper is organised as follows: we define the concept of nominal outlyingness by adopting a rule-based mindset in Section \ref{sec:defoutliers} and we give a brief introduction to key concepts from the area of association rule mining. We then propose a score that quantifies abnormal behaviour in Section \ref{sec:proposal}. Our main contribution is the formulation of a score that takes into account any contextual information, relative proximity of a nominal observation to infrequency, and sparsity of the contingency tables involving multiple nominal features. Based on these considerations, we provide guidelines on the selection of the associated hyperparameters and introduce two concepts that can be used for meta-analysis purposes; these are described in Section \ref{sec:proposal} as well. Some mathematical properties of our proposed score are included in Section \ref{sec:properties}. The efficacy of our method is illustrated via a series of simulations on synthetic and publicly available data sets in Section \ref{sec:applications}, where our proposal is benchmarked against two state-of-the-art frequent pattern mining algorithms. We finally make some concluding remarks in Section \ref{sec:conclusion}.

\section{Outlyingness in discrete space}\label{sec:defoutliers}

In this Section we extend concepts of multivariate outlier detection for continuous data to discrete spaces. We define the concept of nominal outlyingness and give a brief introduction to basic notions of association rule mining.

\subsection{Definition of nominal outlyingness}

%\textcolor{red}{One may find several definitions of outlyingness for continuous observations in the literature.}
Outlyingness for continuous data is typically defined based on assumptions related to the data generating mechanism \citep{davies1993identification, becker1999masking}, or the location of each observation relative to the rest of the points in the feature space \citep[e.g.][]{breunig2000lof,liu2008isolation}. 
%\textcolor{red}{If nominal variables appear in a data set, using the location of points becomes difficult to work with and looking at distances in the categorical space is not ideal for reasons previously stated in Section \ref{sec:intro}. As we are dealing with point masses instead of densities, only a fixed number of values can occur, meaning that location may no longer be used as a criterion to assess anomalous behaviour. However, one can make use of the ideas related to the identification of outliers with respect to the mechanism that has been used to generate the data. In fact, all we have to assume is that the data has been generated from some underlying discrete distribution. Then, we turn our attention to observations of abnormally high or low mass with respect to this discrete distribution, just like we would suspect points of low density to be potential outliers in the continuous setting. This leads to two possible definitions of nominal outlyingness; one for observations whose nominal values are too frequent and one for too infrequent values, always with respect to an assumed discrete random variable. The notion of \textit{frequency} is described in detail later in the section.}
However, location and distance are not meaningful criteria for detecting anomalies among nominal features. Assuming an underlying discrete distribution, we define outliers as values with unusually high or low frequency, analogous to low-density points in the continuous case.

Notice that looking at both highly frequent and highly infrequent values is rather unsound; 
%\textcolor{red}{this is easy to realise by considering a toy example of one binary variable generated by a Bernoulli distribution for which the success probability has been significantly misspecified. In such a case, one of the two values will be treated as highly infrequent. However, this implicitly means that the other value will be highly frequent.}
for instance, a highly infrequent level of a binary variable renders the other level highly frequent by default. Therefore, one needs to specify whether their interest lies in highly frequent or highly infrequent values, that is, what they wish to define as a nominal outlier. This leads us to the following definition of nominal outlyingness:

%\textcolor{red}{Extending these ideas to multiple nominal features requires taking combinations of their values into account. One potential downside of such an approach is that in the presence of multiple nominal variables taking a large number of values, it is very likely for almost every single combination of values to appear fewer times than expected. As a result, our definition of nominal outlyingness considers sequences of nominal values, weighing their contribution to outlyingness in a way that is inversely proportional to the sequence length. This is done in order to ensure that, if interest lies in highly infrequent sequences, combinations of values are seen with greater suspicion if they are shorter in length, as they are typically expected to be more abundant than their supersets in a data set. Based on these considerations, we can now define nominal outlyingness as follows:}

\begin{definition}\label{def:nominalout}
    Nominal outlyingness is the degree to which a sequence of nominal values and its subsets are markedly different from an assumed discrete distribution that has generated them.
\end{definition}
In order to introduce a framework for quantifying nominal outlyingness as given in Definition \ref{def:nominalout}, we borrow some ideas from the association rule mining literature. We briefly outline some basic concepts in the following subsection.

\subsection{Association rule mining background}

Association rule mining is a method used to discover relationships of interest between sets of data items, with the seminal work of \cite{agrawal1993mining} having been applied in fields like market basket analysis \citep{unvan2021market} and customer segmentation \citep{silva2019association}. 
%\textcolor{red}{The work of \cite{agrawal1993mining} was one of the first to focus on generating association rules between items in a large database of customer transactions and has paved the way for related concepts to be used in market basket analysis \citep{unvan2021market}, customer segmentation \citep{silva2019association} and other application areas.} %\textcolor{red}{Below,} 
We present some basic notions used in association rule mining. %\textcolor{red}{We later make use of these concepts in the development of a suitable score that quantifies how outlying a nominal observation is.}

%\textcolor{red}{We start by defining}
An \textit{itemset} is a set of items which occur together. Here, \textit{item} refers to the value any variable takes for each observation
%\textcolor{red}{. Since we are working with nominal data, an itemset is simply a sequence of categorical levels.} 
, that is, a sequence of categorical levels in our setting. We denote an itemset by $d$ and its length by $\lvert d\rvert$. Notice that $1 \leq \lvert d \rvert \leq p$, where $p$ is the number of variables and $\lvert d \rvert = 1$ refers to one level of a single nominal variable. Finally, we define the \textit{support} of a an itemset $d$, denoted by $\operatorname{supp}(d)$, as the number of times the itemset $d$ appears in the data.

Given a large data set, some itemsets may appear more often than others. 
%\textcolor{red}{In an outlier identification problem, where the focus is on highly frequent or highly infrequent occurrences, determining whether an itemset is frequent or infrequent is of a major importance.}
This motivates the definition of a notion of frequency. An itemset can be considered as \textit{frequent} when its support exceeds a \textit{minimum support threshold} value, typically denoted by $\sigma$; otherwise it is \textit{infrequent}. 
%\textcolor{red}{As we will see,} 
This threshold value may differ according to the itemset that is being looked at, therefore we use the notation $\sigma_d$ instead, to denote that the threshold value is specific to an itemset $d$. Determining $\sigma_d$ is one of the most challenging problems in the association rule mining literature; this is something that we deal with later in the paper. 

Finally, we introduce the \textit{Apriori principle} and the related concept of \textit{support-based pruning}. According to the Apriori principle \citep{agrawal1994fast}, if an itemset is frequent, then all of its subsets must be frequent as well. Equivalently, if an itemset is infrequent, then all its supersets are also infrequent. A direct implication of the Apriori principle is that one may not consider subsets or supersets of itemsets that are already found to be frequent or infrequent, respectively, something known as support-based pruning. The reason why support-based pruning is used is to avoid redundant computations and hence reduce the computational complexity of the search by trimming the space of itemsets that are being examined.

\section{Proposed framework}\label{sec:proposal}

The problem of outlier identification is generally complex in the sense that it is an unsupervised problem and thus, we can not be certain of whether an observation is truly outlying. However, what we can do is measure outlyingness with respect to a given scheme. By doing so, not only can we deduce how likely an observation is to be an outlier, but we can also compare observations in terms of their outlyingness. In this Section, we introduce a score that quantifies nominal outlyingness given Definition \ref{def:nominalout}. 
%\textcolor{red}{We explain the rationale behind our score formulation and describe how values for the involved hyperparameters can be chosen. We finally introduce the concepts of \textit{nominal outlyingness depth} and the \textit{contribution matrix} which can be used for meta-analysis purposes.}

\subsection{Score of nominal outlyingness}

Let us assume that we are working with a dataset $\mathcal{D}$ consisting of $p$ nominal variables $\boldsymbol{X}_1, \ldots, \boldsymbol{X}_p$. We denote the number of levels for each variable $\boldsymbol{X}_i$ by $\ell_i$ and assume without loss of generality that its levels are encoded by the natural numbers $1, \ldots, \ell_i$ $(i = 1, \ldots, p)$. 
%\textcolor{red}{Having defined some basic notions from the association rule mining literature,} 
We treat each entry in a contingency table as an individual itemset; the total number of contingency tables is given by $2^p - 1$ and the total number of itemsets that one can encounter is $\prod_{i=1}^{p}\left(\ell_i+1 \right)-1$.
%\textcolor{red}{, assuming that an individual nominal feature $\boldsymbol{X}_i$ is a $(\ell_i \times 1)$-dimensional contingency table.} 
We further define $\boldsymbol{\pi}_i$ to be the $\ell_i$-dimensional vector including the expected proportions for each level of $\boldsymbol{X}_i$ $(i = 1, \ldots, p )$. 

We define a \textit{score of nominal outlyingness} in a similar manner to \cite{otey2006fast}.
%\textcolor{red}{, who use ideas from the frequent itemset mining literature to introduce links between variables in the categorical space.} 
Given $n$ observations $\boldsymbol{x}_1, \ldots, \boldsymbol{x}_n \in \mathcal{A}$, where $\mathcal{A} = \bigtimes_{i=1}^p \{1, \ldots, \ell_i\}\subset \mathbb{N}^p$ and $\bigtimes$ denotes the Cartesian product, we define the score of nominal outlyingness for an observation $\boldsymbol{x}_i$ as
\begin{align}\label{eq:discretescoreinfrequent}
    %s(\boldsymbol{x}_i)=\sum_{\substack{d \subseteq \boldsymbol{x}_{i}: \\ \operatorname{supp}(d) \notin (\sigma_d, n), \\ \lvert d \rvert \leq \maxlen
    %, \\ \left\{\left\{k, k'\right\}: u_{k,k'}>u^{\mathrm{upper}}_{k,k'}\right\}\nsubseteq d
    %}}\frac{\log(\alpha) - \log \mathbb{P}(\boldsymbol{X} = \boldsymbol{x}_i \mid d)}{\prod\limits_{j \subseteq d} \ell_j \times \lvert d \rvert^r}, \quad  \alpha, r> 0, \ i=1,\dots,n.
    s(\boldsymbol{x}_i)=\sum_{\substack{d \subseteq \boldsymbol{x}_{i}: \\ \operatorname{supp}(d) \notin (\sigma_d, n], \\ \lvert d \rvert \leq \maxlen}} \frac{\sigma_d}{\operatorname{supp}(d) \times \lvert d \rvert^r}, \\
    r> 0, \ i=1,\dots,n, \nonumber
\end{align}
if we are interested in highly infrequent itemsets. 
%\textcolor{red}{Recall that $\operatorname{supp}(d)$ refers to the support of itemset $d$, as defined in Section \ref{sec:defoutliers}. Otherwise,} 
For highly frequent itemsets, the score of nominal outlyingness becomes
\begin{align}\label{eq:discretescorefrequent}
    %s(\boldsymbol{x}_i)=\sum_{\substack{d \subseteq \boldsymbol{x}_{i}: \\ \operatorname{supp}(d) \notin (0, \sigma_d), \\ \lvert d \rvert \leq \maxlen
    %, \\ \left\{\left\{k, k'\right\}: u_{k,k'}>u^{\mathrm{upper}}_{k,k'}\right\}\nsubseteq d
    %}}\frac{\log(\alpha) - \log \mathbb{P}(\boldsymbol{X} = \boldsymbol{x}_i \mid d)}{\prod\limits_{j \subseteq d} \ell_j \times \lvert d \rvert^r}, \quad  \alpha, r> 0, \ i=1,\dots,n.
    s(\boldsymbol{x}_i)=\sum_{\substack{d \subseteq \boldsymbol{x}_{i}: \\ \operatorname{supp}(d) \notin [0, \sigma_d), \\ \lvert d \rvert \leq \maxlen}} \frac{\operatorname{supp}(d)}{\sigma_d \times \left( \maxlen - \lvert d \rvert + 1 \right)^r}, \\
    r > 0, \ i=1,\dots,n. \nonumber
\end{align}

For the rest of this section, we assume that nominal outliers are highly infrequent observations with respect to an underlying discrete distribution. Hence, our arguments involve Expression \eqref{eq:discretescoreinfrequent}; extending these to the case of highly frequent itemsets is straightforward.

The formulation in Expression \eqref{eq:discretescoreinfrequent} %\textcolor{red}{begins by looking at itemsets of unit length and compares their support to a lower threshold value $\sigma_d$. If the support lies outside $(\sigma_d, n]$, the score of all points possessing that itemset is augmented in a way that is directly proportional to $\sigma_d$ and inversely proportional to the product of the itemset support and the itemset length raised to a non-negative power $r$.} 
augments the score for observations possessing items of length up to $\maxlen$ \citep[this is alternative notation to what][call $\text{MAXLEN}$]{koufakou2007scalable} with support outside $(\sigma_d, n]$ proportionally to $\sigma_d/\operatorname{supp}(d)$. This improves upon \cite{koufakou2007scalable} and \cite{koufakou2010} by accounting for relative proximity to infrequency. 
%\textcolor{red}{One can immediately notice the similarity of Expression \eqref{eq:discretescoreinfrequent} (with $r=1$ and a numerator of a unit) to the one used by \cite{koufakou2007scalable} and \cite{koufakou2010}. The idea behind modifying the score to compute $\sigma_d/\operatorname{supp}(d)$ instead of just $1/\operatorname{supp}(d)$ is that we wish to ensure a more fair contribution of itemsets with respect to their expected probabilities.} 
%\textcolor{red}{For instance, if two highly infrequent itemsets $d_1, d_2$ of unit occur equally often in a data set, the score of \cite{koufakou2010} would treat them as equally outlying. Yet, if the associated probabilities satisfy $\pi_{d_1} < \pi_{d_2}$, the corresponding threshold values are expected to obey $\sigma_{d_1} < \sigma_{d_2}$. In this case, it is more natural to view $d_2$ as more outlying than $d_1$, since the former was expected to occur much more frequently than the latter.} 
For example, two itemsets may occur equally often, but if one is expected to be much more common, the ratio $\sigma_d/\operatorname{supp}(d)$ assigns it a higher degree of outlyingness. 

Division by the itemset length ensures that naturally rarer, longer itemsets contribute less to the score than shorter ones, while the exponent $r$ tunes how sharply this length penalty is applied. In practice, this enables unusual single attributes (e.g. a rare occupation or diagnosis code) to provide clearer evidence of outlyingness than a rare combination of many attributes, which may simply be a consequence of combinatorial sparsity. 
%\textcolor{red}{Computing the ratio $\sigma_d/\operatorname{supp}(d)$ is a way to account for this and ensure that the score is augmented proportionally to the ratio of the minimum expected over the actual support of each highly infrequent itemset.}

Despite not being explicitly mentioned in Expression \eqref{eq:discretescoreinfrequent}, support-based pruning is implemented to exclude supersets of any highly infrequent itemsets from subsequent computations. That is, if $\lvert d_1 \rvert < \maxlen$ and $\operatorname{supp}(d_1) < \sigma_{d_1}$, then any itemset $d_2 \supset d_1$ is not checked in subsequent calculations. 
%\textcolor{red}{Support-based pruning, as also implemented in \cite{koufakou2010}, is very convenient when the score is computed in an ascending order from $\lvert d \rvert = 1$ up to $\maxlen$. However, this is only useful in the case of looking for highly infrequent itemsets. In contrast, using the Apriori principle per se is much more useful when we start looking at itemsets of length $\maxlen$ until we reach individual categorical levels; this approach is favoured when highly frequent itemsets are of interest instead.} 
We illustrate support-based pruning for a data point $\boldsymbol{x}$ consisting of three nominal levels, say $\boldsymbol{x} = (1, 1, 1)^\intercal$, in Figure \ref{fig:treesearchexample}. We use a tree representation to illustrate how the search for highly frequent or infrequent itemsets is being conducted.

\begin{figure}
    \begin{subfigure}[h!]{0.45\linewidth}
        \includegraphics[width=\linewidth]{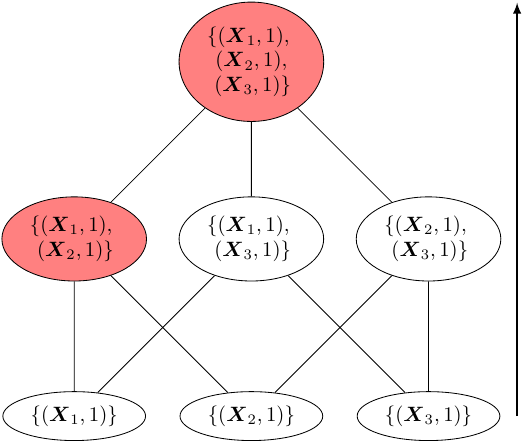}
        \caption{Bottom-up search for highly infrequent itemsets.}
        \label{subfig:bottomupsearch}
    \end{subfigure}
    \hfill
    \begin{subfigure}[h!]{0.45\linewidth}
        \includegraphics[width=\linewidth]{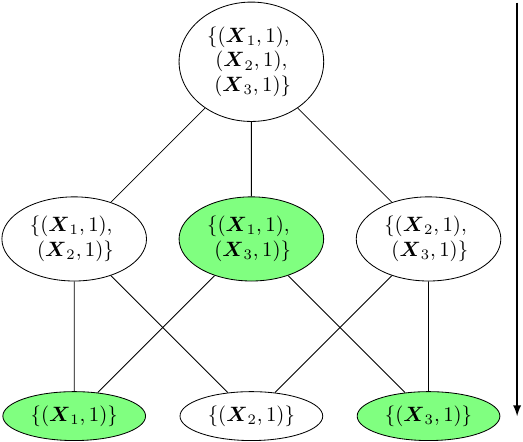}
        \caption{Top-down search for highly frequent itemsets.}
        \label{subfig:topdownsearch}
    \end{subfigure}%
    \caption{Implementation of the Apriori principle in support-based pruning for a toy example using a tree representation. Notation $(\boldsymbol{X}_i, l)$ means $\boldsymbol{X}_i = l$. Red and green nodes represent highly infrequent and highly frequent itemsets, respectively.}
    \label{fig:treesearchexample}
\end{figure}

%\textcolor{red}{Figure \ref{fig:treesearchexample} illustrates how the algorithm proceeds if the user is interested in highly infrequent or highly frequent itemsets.} 
A bottom-up search is used when infrequent itemsets are of interest; this bottom-up traversal of the tree allows for highly infrequent itemsets to be detected and any of their supersets to be removed from subsequent searching. For instance, in Figure \ref{subfig:bottomupsearch} the itemset consisting of $\boldsymbol{X}_1 = 1$ and $\boldsymbol{X}_2 = 1$ is infrequent, thus the itemset that further includes the value of $\boldsymbol{X}_3$ is not considered in the subsequent search. If highly frequent itemsets are of interest instead, the method proceeds analogously using a top-down approach instead, as shown in Figure \ref{subfig:topdownsearch}. Therefore, assuming that $\operatorname{supp}((\boldsymbol{X}_1, 1), (\boldsymbol{X}_2, 1)) = v$, $\maxlen > 2$, and the corresponding minimum support threshold value is $u > v$, we get $s(\boldsymbol{x}) = u/(2^rv)$; this is the same for any observation satisfying $\boldsymbol{X}_1=1$ and $\boldsymbol{X}_2 = 1$.
%\textcolor{red}{In this case, if the itemset consisting of $\boldsymbol{X}_1 = 1$ and $\boldsymbol{X}_3 = 1$ is flagged as frequent, all its subsets will also be marked as such and will therefore not be included in the rest of the search.}

%\textcolor{red}{Our main contributions include a method for determining the support threshold values $\sigma_d$ that are specific to each itemset $d$. Moreover, we propose a suitable stopping criterion, that is a $\maxlen$ value, that accounts for the sparsity introduced in higher-dimensional contingency tables. We also derive some mathematical properties of $s(\cdot)$, such as when $s(\cdot)$ is maximised and what its maximum value is; these are included in Propositions \ref{prop:maxdiscscore_expression} \& \ref{prop:maxdiscscore} in Section \ref{sec:properties}.}

\subsection{Choice of hyperparameter values}

The hyperparameters involved in the calculation of the score $s(\cdot)$ are the minimum support threshold $\sigma_d$, the maximum length of sequences considered $\maxlen$, and the exponent term $r$. We start by looking at the minimum support threshold value $\sigma_d$. This is defined with respect to the assumed underlying Multinomial distribution for each itemset. Assume without loss of generality that we are working with itemsets $d$ consisting of the first $\lvert d \rvert$ nominal variables. For such itemsets, denoted as $\boldsymbol{X}_{[1: \lvert d \rvert]}$, the underlying Multinomial distribution that we consider is
\begin{equation*}
    \boldsymbol{X}_{[1: \lvert d \rvert]} = \boldsymbol{x} \mid d \sim \text{Multinomial}\left( 1, \boldsymbol{p}^d \defeq \bigotimes\limits_{i=1}^{\lvert d \rvert} \boldsymbol{\pi}_i \right),
\end{equation*}
where $\otimes$ is the Kronecker product. Notice how this formulation implicitly assumes independence of the nominal variables; a dependence structure between the nominal variables does not change anything besides the definition of the probability vector, which has to be explicitly provided by the user instead. 

%\textcolor{red}{Returning to our original problem of determining a suitable threshold value for the support of an itemset $d$, we compute the support of the itemset and see if it falls within a specific range of reasonable values under the assumed model.} 
In order to check if the support of an itemset is within a sensible range of values under the assumed model, we construct an upper one-sided $100(1-\alpha)\%$ confidence interval for the probability of observing itemset $d$. Since all itemsets of length $\lvert d \rvert$ form a Multinomial distribution with $\prod_{i=1}^{\lvert d \rvert} \ell_i$ possible outcomes, we use the method of \cite{sison1995simultaneous} to construct $100(1-\alpha)\%$ simultaneous confidence intervals for the Multinomial proportions. 
%\textcolor{red}{The confidence intervals are constructed by approximating the Multinomial cumulative distribution function (CDF) using a theorem from \cite{levin1981representation}. The theorem approximates the Multinomial CDF as the product of Poisson probabilities multiplied by the probability mass function of a sum of truncated Poisson random variables, which is approximated using Edgeworth series.} 
The two-sided $100(1-\alpha)\%$ confidence intervals for the components of the Multinomial proportions vector $\boldsymbol{p}^d$ are given by
\begin{equation}\label{eq:sisonglazci}
    \left(p^d_i - \frac{c_\alpha}{n}, p^d_i + \frac{c_\alpha+2\gamma_\alpha}{n}\right), \quad i = 1, \ldots, \prod_{j=1}^{\lvert d \rvert}\ell_j,
\end{equation}
where $\gamma_\alpha = (1-\alpha-\nu(c_\alpha))/(\nu(c_\alpha+1)-\nu(c_\alpha))$, $\nu(c_\alpha) = \mathbb{P}(x_i - c_\alpha \leq X_i \leq x_i + c_\alpha)$ with $X_i \sim \text{Multinomial}(n, p^d_i)$ and $c_\alpha \in \mathbb{Z}$ satisfies $\nu(c_\alpha) < 1-\alpha < \nu(c_\alpha+1)$.

In a similar manner, the upper and lower one-sided $100(1-\alpha)\%$ confidence intervals are given by 
\begin{align*}
    \left(p^d_i - \frac{c_{2\alpha}}{n}, 1\right), & \quad i = 1, \ldots, \prod_{j=1}^{\lvert d \rvert}\ell_j, \\
    \left(0, p^d_i + \frac{c_{2\alpha}+2\gamma_{2\alpha}}{n}\right), & \quad i = 1, \ldots, \prod_{j=1}^{\lvert d \rvert}\ell_j.
\end{align*}
%\textcolor{red}{The confidence intervals obtained for the Multinomial proportions are used to determine what the least expected values for the itemset supports should be.} 
Multiplying the lower bound of each upper one-sided interval by the sample size $n$ yields bounding values for the support of each itemset which are used as our values for $\sigma_d$; these are no longer universal but specific to each itemset and its assumed frequency.
%\textcolor{red}{, determining when an itemset is highly infrequent with respect to the underlying Multinomial distribution.} 
%\textcolor{red}{This extends the idea of \cite{otey2006fast} and \cite{koufakou2010}, who also augmented the nominal outlyingness scores of observations possessing highly infrequent itemsets using an expression similar to \eqref{eq:discretescoreinfrequent}, but with a constant $\sigma$ value that was the same for all itemsets of any length.}

As the itemset length grows, the associated contingency tables become more sparse. This leads to a serious issue if one assumes that $p_i^d \overset{\lvert d \rvert \rightarrow p}{\longrightarrow} 0$, since almost any itemset is considered as highly infrequent and there is a substantial and potentially prohibitive increase in computational resources required. 
%\textcolor{red}{As a result, we seek to find a maximum allowed value for $\lvert d \rvert$, which we denote by $\maxlen$, so that itemsets of length up to $\maxlen$ will be included in our score computations.} 
Determining a stopping criterion $\maxlen$ so that it reflects on our concerns regarding sparsity can be done by requiring all upper thresholds $\sigma_d$ to be at least equal to two for all itemsets $d$. The rationale behind this is that if $\sigma_d < 2 \ \forall d : \lvert d \rvert = M$, then only itemsets of length $M \leq p$ with unit support are not flagged as infrequent. This is a restrictive case leading to most (if not all) observations having their scores augmented due to being more frequent than expected. Thus, we define $\maxlen$ as follows
\begin{align*}
    \maxlen = \max\left\{ \right.& 1 \leq M \leq p: \sigma_d \geq 2 \\ 
    & \left. \mathrm{for \ all} \ d \subseteq\boldsymbol{x}_i \ \mathrm{with} \ \lvert d \rvert \leq M \right\}.
\end{align*}

%\textcolor{red}{The use of concepts from the association rule mining literature for the proposed score of nominal outlyingness presents a practical issue related to computational complexity. For a given value of $\maxlen$, the total number of contingency tables that can be generated is given by $\sum_{k=1}^{\maxlen}C^p_k$, where $C$ denotes the binomial coefficient. Moreover, the number of possible itemsets also gets very large which shows that under the presence of many variables, the total number of comparisons that need to be made can grow very rapidly, especially for large $\maxlen$. As a result, any computations that may be redundant should be omitted.} 

%\textcolor{red}{Our proposal for the score includes an additional hyperparameter, that is the exponent term $r > 0$.} 
The exponent term $r > 0$ controls the influence of longer itemsets in the nominal outlyingness score. Based on the principle of \cite{koufakou2010} which states that an observation with frequent itemsets of unit length and $k \leq p$ infrequent itemsets of length two should not be seen as suspicious to be an outlier as an observation with $k$ infrequent itemsets of unit length, our framework allows the user to decide the score contribution of longer itemsets. 
%\textcolor{red}{A larger $r$ value will yield a smaller augmentation of $s(\cdot)$ as $\lvert d \rvert$ increases, hence providing additional flexibility and control of how the score of outlyingness is being computed.} 
The choice of $r$ is context-specific and may be selected based on the level of sparsity in the data set. In sparse datasets, where many longer itemsets occur rarely by construction, high values of $r$ may overly penalise these itemsets, while in denser data, moderate values allow longer patterns to contribute meaningfully. For practical purposes, we suggest restricting to $r \leq 3$; when $r = 3$, a two-dimensional itemset requires $\sigma_d/\operatorname{supp}(d) > 8$ to match the contribution of a single attribute. Larger values are possible, but we advise against setting $r$ too large, so that longer itemsets can still play a role without dominating the score.
%\textcolor{red}{We cannot recommend a specific value for $r$ as that is typically context-specific and depends on the weight the user wants to assign to highly infrequent (or frequent) itemsets of greater length. For instance, a reasonable value for $r$ would be $r \leq 3$; in case we set $r=3$, we would need $\sigma_d/\operatorname{supp}(d) > 8$ for a two-dimensional itemset $d$ to contribute to $s(\cdot)$ as much as a specific outlying level on its own. While setting $r > 3$ is certainly possible, we suggest that $r$ should not be much larger than the recommended value of $r=3$, so that longer sequences can also contribute to the score, at least to some extent.}

\subsection{Nominal outlyingness depth \& Contribution matrix}

We introduce the concept of \textit{nominal outlyingness depth}, as an 
%\textcolor{red}{extension of our proposed method for quantifying nominal outlyingness. This is an} 
additional metric 
%\textcolor{red}{that we introduce} 
for meta-analysis purposes.
%\textcolor{red}{so that the scores are easier for the user to understand and interpret.} 
As illustrated in Figure \ref{fig:treesearchexample}, the score $s(\cdot)$ is being computed in a hierarchical process and can therefore be represented using a tree-based structure. This has motivated us to come up with a concept inspired from the continuous anomaly detection literature and more precisely from the Isolation Forest algorithm \citep{liu2008isolation}, where outlyingness is assessed by the \textit{average depth}; that is the average number of random splits that it takes for an observation to be \textit{isolated} in the $p$-dimensional Euclidean space. We 
%\textcolor{red}{proceed in a similar manner and} 
introduce a Nominal Outlyingness Depth concept, which we denote by $\operatorname{NOD}(\cdot)$, given by
\begin{align}\label{eq:nod_def}
    \operatorname{NOD}(\boldsymbol{x}_i) =\left.\sum_{\substack{d \subseteq \boldsymbol{x}_{i}: \\ \operatorname{supp}(d) \notin (\sigma_d, n], \\ \lvert d \rvert \leq \maxlen}} \lvert d \rvert \middle/ \sum_{\substack{d \subseteq \boldsymbol{x}_{i}: \\ \operatorname{supp}(d) \notin (\sigma_d, n], \\ \lvert d \rvert \leq \maxlen}}\right. 1, \\
    i=1,\dots,n. \nonumber
\end{align}
%\textcolor{red}{The rationale behind Expression \eqref{eq:nod_def} is that} 
For each observation $\boldsymbol{x}_i$ we compute the average itemset length over all infrequent itemsets included in $\boldsymbol{x}_i$. A similar Expression to \eqref{eq:nod_def} can be formulated if highly frequent itemsets are of interest.
%\textcolor{red}{; the only modification is changing the interval $(\sigma_d, n]$ to $[0, \sigma_d)$ and $\lvert d \rvert$ on the numerator becomes $\maxlen - \lvert d \rvert + 1$, to account for the fact that we use a top-down search instead. One can also consider the definition of the nominal outlyingness depth}
Using a tree representation, NOD is defined as the average traversal depth of highly infrequent or highly frequent itemsets, where the depth is relative to a base level that depends on whether a bottom-up or a top-down search is being used. Such an illustration is given in Figure \ref{fig:depth_example} for the toy example of $\boldsymbol{x} = (1, 1, 1)^\intercal$. For instance, according to Figure \ref{subfig:depth_bottomup}, we are looking for highly infrequent itemsets and thus we use a bottom-up approach. The nominal outlyingness depth in this case is given by $(1+2)/2 = 1.5$. Similarly for the case of highly frequent itemsets,
%\textcolor{red}{where a top-down search is being employed as in Figure \ref{subfig:depth_topdown}, }
$\text{NOD}(\boldsymbol{x})(2+3)/2 = 2.5$. The depth value ranges from a unit up to $\maxlen$, with a lower value corresponding to a larger degree of outlyingness, except for observations with no highly frequent or highly infrequent itemsets; these have a zero depth.

We remark that the proposed NOD parallels ideas of classical statistical depth functions \citep[see][for a review]{zuo2000general} which measure centrality via distances or directional projections. Here, centrality is instead captured by the frequency structure of the underlying discrete distribution. Observations supported mainly by longer infrequent itemsets are considered more outlying, while those involving shorter, common itemsets are viewed as deeper.

\begin{figure}
    \begin{subfigure}[h!]{0.45\linewidth}
        \includegraphics[width=\linewidth]{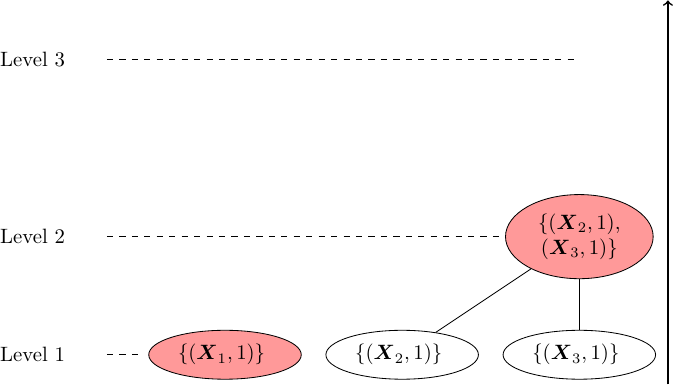}
        \caption{Nominal outlyingness depth calculation for highly infrequent itemsets.}
        \label{subfig:depth_bottomup}
    \end{subfigure}
    \hfill
    \begin{subfigure}[h!]{0.45\linewidth}
        \includegraphics[width=\linewidth]{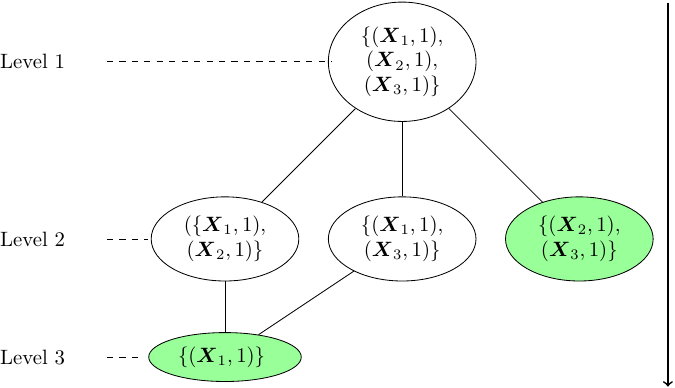}
        \caption{Nominal outlyingness depth calculation for highly frequent itemsets.}
        \label{subfig:depth_topdown}
    \end{subfigure}%
    \caption{Illustration of the calculation of nominal outlyingness depth calculation using a tree representation. Notation $(\boldsymbol{X}_i, l)$ means $\boldsymbol{X}_i = l$. Red and green nodes represent highly infrequent and highly frequent itemsets, respectively. Support-based pruning has been applied in both cases.}
    \label{fig:depth_example}
\end{figure}

The final concept we introduce is that of the \textit{contribution matrix}, which we denote by $\boldsymbol{\mathrm{C}}$. This is a $(n \times p)$-dimensional matrix, with the $(i, j)$th entry being the contribution of variable $j$ to the nominal outlyingness score of $\boldsymbol{x}_i$. For highly infrequent itemsets, this contribution is given by
\begin{align}\label{eq:contrib_infrequent}
    %s(\boldsymbol{x}_i)=\sum_{\substack{d \subseteq \boldsymbol{x}_{i}: \\ \operatorname{supp}(d) \notin (\sigma_d, n), \\ \lvert d \rvert \leq \maxlen
    %, \\ \left\{\left\{k, k'\right\}: u_{k,k'}>u^{\mathrm{upper}}_{k,k'}\right\}\nsubseteq d
    %}}\frac{\log(\alpha) - \log \mathbb{P}(\boldsymbol{X} = \boldsymbol{x}_i \mid d)}{\prod\limits_{j \subseteq d} \ell_j \times \lvert d \rvert^r}, \quad  \alpha, r> 0, \ i=1,\dots,n.
    c_{i,j} = \sum_{\substack{d \subseteq \boldsymbol{x}_{i}: \\ j \subseteq d \\ \operatorname{supp}(d) \notin (\sigma_d, n], \\ \lvert d \rvert \leq \maxlen}} \frac{\sigma_d}{\operatorname{supp}(d) \times \lvert d \rvert^{r+1}}, \quad  r> 0, \\
    i=1,\dots,n, \ j=1, \dots ,p. \nonumber
\end{align}
%\textcolor{red}{As a result,} 
All the elements of the contribution matrix are non-negative and the sum of each row is the score of nominal outlyingness for the relevant observation. 
%\textcolor{red}{that is $\sum_{j=1}^p c_{i,j} =  s(\boldsymbol{x}_i)$.} 
The main idea behind Expression \eqref{eq:contrib_infrequent} is that if an itemset is found to be highly infrequent, all variables involved receive an equal contribution to the augmentation of the score that is caused by this sequence of categorical levels. 
%\textcolor{red}{One can therefore use} 
The contribution matrix can facilitate understanding of the source of nominal outlyingness of the data by seeing which variables are involved in highly infrequent (or frequent) itemsets.
%\textcolor{red}{, causing an increase of the score of nominal outlyingness.}

\subsection{The SONO algorithm}

Based on the previous subsections, we present the pseudo-code for our algorithm in Algorithm \ref{code:sono}, which we call SONO (Scores Of Nominal Outlyingness). The code assumes highly infrequent itemsets are of interest.

\begin{algorithm*}[!ht]
\caption{The SONO Algorithm for highly infrequent itemsets}
\label{code:sono}
\begin{algorithmic}[1]
\State \textbf{Input:} Data set $\boldsymbol{X}$ with $n$ observations, $p$ nominal variables, exponent term $r>0$, significance level $\alpha > 0$, expected proportions of variable levels $(\boldsymbol{\pi}^\intercal_1, \ldots, \boldsymbol{\pi}^\intercal_p)^\intercal$. (Optional: $L_{\max}$ value)
\State \textbf{Output:} Scores of nominal outlyingness $\{s(\boldsymbol{x}_i)\}_{i=1}^n$
\Statex

\State Estimate $L_{\max}$ based on $(\boldsymbol{\pi}^\intercal_1, \ldots, \boldsymbol{\pi}^\intercal_p)^\intercal$ (unless user-provided)
\State Initialise list of infrequent itemsets $\mathcal{I} \gets \varnothing$
\State Set $M \gets 1$
\While{$M \leq L_{\max}$}
  \State Record supports of all itemsets not in $\mathcal{I}$ of length $M$ in $\boldsymbol{X}$
  \State Construct simultaneous $100(1-\alpha)\%$ upper CIs for each itemset of length $|d|$ and determine $\sigma_d$ values
  \If{$\exists d \notin \mathcal{I}$ such that $\operatorname{supp}(d) < \sigma_d$ and $|d| = M$}
    \State Augment $s(\boldsymbol{x}_i)$ for all $\boldsymbol{x}_i \supseteq d$
    \State Flag $d$ and its supersets as infrequent; $\mathcal{I} \gets \mathcal{I} \cup \{d' : d' \supseteq d\}$
  \EndIf
\EndWhile
\end{algorithmic}
\end{algorithm*}

\section{Properties of the proposed score}\label{sec:properties}

We now explore some properties of our proposed score of nominal outlyingness. Notice that we focus exclusively on the case of an outlier being defined as consisting of highly infrequent itemsets; extending these properties to the frequent case is straightforward. We are mainly interested in the range of values the score can take, as well as on when it is maximised; essentially, we would like to know under what circumstances an observation is extremely likely to be outlying and how the associated hyperparameters influence its score. Due to the complexity of the problem, the following statements rely on certain assumptions that represent rather extreme cases but they still provide useful intuition for interpreting the scores.

\begin{proposition}\label{prop:maxdiscscore_expression}
    Let $\maxlen = p$ and assume the minimum support threshold value is given by its maximum possible value so that $\sigma_d/\operatorname{supp}(d) = n-\delta$, where $\delta \rightarrow 1$ for all itemsets $d$. The maximum value of the nominal score of outlyingness $s(\boldsymbol{x}_i)$ for infrequent itemsets in a data set with $p \geq 2$ nominal variables is attained for all itemsets of length $1 \leq k \leq p$ appearing just once.
\end{proposition}
\begin{proof}
    %\textcolor{red}{We assume that we have $p \geq 2$ nominal variables (for the case of $p=1$, the result follows immediately by the definition of the nominal score) and recall the formulation of the discrete score for an observation $1 \leq i \leq n$ as given in Expression \eqref{eq:discretescoreinfrequent}$$     s(\boldsymbol{x}_i)=\sum_{\substack{d \subseteq \boldsymbol{x}_{i}: \\ \operatorname{supp}(d) \notin (\sigma_d, n], \\ \lvert d \rvert \leq \maxlen}} \frac{\sigma_d}{\operatorname{supp}(d) \times \lvert d \rvert^r}, \quad  r> 0.$$ Assuming a large $\maxlen$ value (to avoid having many restrictions on the calculation of the score), it is easy to see that itemsets of unit support with a large minimum support threshold value $\sigma_d$ yield a higher increase in the score. This case corresponds to highly misspecified large probabilities for itemsets that only appear once in the data, for which $\sigma_d = n - \delta$, where $\delta$ is typically a small positive integer. Since $\sigma_d$ decreases as $\lvert d \rvert$ gets larger and as this is a parameter that is hard to determine in advance, we will stick with the maximisation of the $1/\lvert d \rvert ^r$ term, assuming the maximum possible $\sigma_d$ value of $\sigma_d = n-1$.}
    Recalling the score in Expression \eqref{eq:discretescoreinfrequent} and given these assumptions, the total score is equivalently written as
    \begin{multline*}
    \mathcal{A} = (n-1) \times \\
     \times \left\{ p - \sum\limits_{i=1}^{p-1}\alpha_i + \sum\limits_{i=1}^{p-1}\dbinom{\alpha_i}{i+1}\frac{1}{\left(i+1\right)^r}\right\}, \\
    \alpha_i \in \left\{0, i+1, \dots, p\right\}, \ \sum\limits_{i=1}^{p-1}\alpha_i \leq p.
    \end{multline*}
    %\textcolor{red}{Notice that the first two terms inside the curly brackets correspond to the contribution of the itemsets of unit length that appear once in the data set, while the third term corresponds to the contribution of all possible itemsets of greater length (up to length $p$) to the score.}
    The coefficients $\alpha_i$ represent the number of nominal variables which are included in infrequent itemsets (of unit support) of length $i+1$. We further impose the restriction that the sum of the $\alpha_i$'s is at most equal to $p$, since any itemsets of length $i+1$ are defined based on $i+1$ variables, and we are restricted to $p$ nominal features. Moreover, we may have no infrequent itemsets of length $i+1$ (in which case $\alpha_i=0$) but if we do have any, then these should be observed in at least $i+1$ nominal features. Despite the abuse of notation, we assume that $C^0_{i+1}=0$ for convenience, with $C$ denoting the binomial coefficient.\\
    
    %\textcolor{red}{Our goal is to find the set of values $\boldsymbol{\alpha} =\left(\alpha_1, \dots, \alpha_{p-1} \right)$ that maximise $\mathcal{A}$ and more precisely, we aim to show that this is achieved either when all the $\alpha_i$'s are equal to zero or when we are on the boundary of the solution space, i.e. when all the $\alpha_i$'s sum to $p$ and all of them but one are exactly zero. The first case corresponds to maximising the nominal score for all $p$ nominal features of an observation being unique within the data set, while the second is interpreted as achieving the maximum nominal score possible when all itemsets of length $i+1$ that can be generated from $p$ nominal variables appear just once and there exist no itemsets of greater or lower length which are infrequent. In both these cases, we basically end up with the conclusion that the maximum score is attained for all itemsets of one specific length occurring once in the data set.\\}
    
    For $p < 2^{r+1} + 1$, we would rather have $\alpha_i=0 \ \forall i$ if
    $$
    \sum\limits_{i=1}^{p-1}\alpha_i > \sum\limits_{i=1}^{p-1}\dbinom{\alpha_i}{i+1}\frac{1}{(i+1)^r}.$$
    Equivalently
    $$
    \sum\limits_{i=1}^{p-1}\left\{ \alpha_i \times \left( \frac{\left(\alpha_i-1\right) \times \dots \times \left(\alpha_i-i\right)}{(i+1)^{r+1}\times i!}-1\right)\right\} < 0.$$
    Since we assume that all the $\alpha_i$'s are non-negative, the expression above can be negative if and only if $\exists \alpha_i > 0$ such that
    \begin{align}\label{eq:aiszero}
        F & = \frac{\left(\alpha_i-1\right) \times \dots \times \left(\alpha_i-i\right)}{(i+1)^{r+1}\times i!}-1 \\
        & = \frac{\Gamma\left(\alpha_i\right)}{\Gamma\left(\alpha_i-i\right)\times\Gamma\left(i+1\right)\times (i+1)^{r+1}} - 1 < 0, \nonumber
    \end{align}
    where $\Gamma(\cdot)$ is the gamma function. If $F<0\ \forall \alpha_i > 0$, we can ensure that $\mathcal{A}$ is maximised for all the $\alpha_i$'s being equal to zero. Since Expression \eqref{eq:aiszero} is maximised when $i=1$ and $\alpha_1 = p$, we obtain
    \begin{align*}
        F = \frac{p-1}{2^{r+1}} < 1 \iff p < 2^{r+1} + 1.
    \end{align*}
        Therefore, $\alpha_i = 0 \ \forall i$ maximises expression $\mathcal{A}$ for $p < 2^{r+1}+1$.\\

    We now assume $p \geq 2^{r+1}+1$. 
    %\textcolor{red}{We first show that getting to the solution boundary is always preferable when having one non-zero $\alpha_i$ and then, we show that setting $\alpha_j$ equal to a non-zero value, where $i \neq j$, decreases the value of $\mathcal{A}$.} 
    Assuming that there exists an index $j$ such that $\alpha_j > 0$ and $\alpha_i=0 \ \forall i \neq j$, we look at what happens to the value of $\mathcal{A}$ if we decrease $\alpha_j$ by a unit. Due to the restriction on the values that $\alpha_j$ can take, we need to assume that $j \neq p -1$, since $\alpha_{p-1}$ can only be equal to 0 or $p$. We denote the loss that results from decreasing $j$ by a unit by $G_1$ and that is equal to
\begin{align*}
    G_1 & = p - \alpha_j + \dbinom{\alpha_j}{j+1}\times \frac{1}{(j+1)^r} -\\
    & - \left\{ p - \alpha_j + 1 + \dbinom{\alpha_j-1}{j+1}\times \frac{1}{(j+1)^r}\right\}\\
    & = \frac{1}{(j+1)^r}\left\{ \dbinom{\alpha_j}{j+1} - \dbinom{\alpha_j-1}{j+1} \right\}-1\\
    & = \frac{\Gamma\left(\alpha_j\right)}{(j+1)^{r-1}\times \Gamma\left(j+2\right)\times \Gamma\left( \alpha_j-j\right)} - 1.
\end{align*}
When decreasing $\alpha_j$ from its minimum possible non-zero value of $j+1$ to zero, we define $G_2$ to be the loss corresponding to this decrease
\begin{align*}
    G_2 & = p - \alpha_j + \dbinom{\alpha_j}{j+1}\times \frac{1}{(j+1)^r} - p\\
    & = -(j+1) + \dbinom{j+1}{j+1}\times\frac{1}{(j+1)^r}\\
    & = \frac{1}{(j+1)^r}-(j+1).
\end{align*}
Hence, $G_2 <0 \ \forall j \in \mathbb{Z}^+$%\textcolor{red}{, which implies $\mathcal{A}$ increases when going from $\alpha_j=j+1$ to $\alpha_j=0$.} 
Recalling that $\alpha_j > j$, we may write $\alpha_j = j + z$, where $z \in \{ 1, \ldots, p-j\}$. The expression for $G_1$ then becomes
\begin{align*}
    G_1 = \frac{\Gamma\left(j+z\right)}{(j+1)^{r-1}\times \Gamma\left(j+2\right)\times \Gamma\left( z\right)} - 1.
\end{align*}
The above expression is increasing in $z$ 
%\textcolor{red}{(this can be seen by considering the ratio $\Gamma(j+z)/\Gamma(z)$ and using Stirling's approximation for the gamma function)} 
and decreasing in $j$. Thus $G_1$ is minimised when $j=p-2$ and $z=1$, in which case
\begin{align*}
    G_1 & = \frac{\Gamma\left(p-1\right)}{(p-1)^{r-1}\times \Gamma\left(p\right)\times \Gamma\left( 1\right)} - 1\\
    & = \frac{1}{(p-1)^r}-1 < 0,
\end{align*}
since $p \geq 2^{r+1} + 1 > 2$ (recall $r>0$). Thus, when we are on the boundary of the solution space, it is preferred for us to stay there, while if we are at the minimum non-zero value of $\alpha_j$, it is preferred to simply set it to zero.\\

The final part of the proof consists of showing that if we are on the boundary of the solution space, with $\alpha_j = p$ and $\alpha_i=0 \ \forall i \neq j$, then activating $\alpha_k$ so that $\alpha_k = k+1$ (where $k \neq j$) leads to a decrease in the value of $\mathcal{A}$. Now clearly, if $\alpha_k$ is set equal to $k+1$, the value of $\alpha_j$ has to drop to $p-(k+1)$, so that the boundary constraint is not violated. We define the loss incurred by activating $\alpha_k$ by $H$ as
\begin{align*}
    H & = \dbinom{p}{j+1}\times \frac{1}{(j+1)^r} - \\ 
    & - \left\{ \dbinom{p-k-1}{j+1}\frac{1}{(j+1)^r} + \dbinom{k+1}{k+1}\frac{1}{(k+1)^r} \right\}\\
    & = \frac{1}{(j+1)^r}\times \left\{\dbinom{p}{j+1} - \dbinom{p-k-1}{j+1} \right\} - \\
    & - \frac{1}{(k+1)^r}\\ 
    & = \frac{1}{(j+1)^r \times\Gamma\left(j+2\right)} \times \\
    & \times \left\{ \frac{\Gamma\left(p+1\right)}{\Gamma\left(p-j\right)} - \frac{\Gamma\left(p-k\right)}{\Gamma\left(p-k-j-1\right)}\right\} - \frac{1}{(k+1)^r}.
\end{align*}
%\textcolor{red}{We distinguish between two possible cases, namely $j < k$ and $j >k$.} 
In the case $j < k$, we obtain
\begin{align*}
    H & = \frac{1}{(k-\beta+1)^r \times\Gamma\left(k-\beta+2\right)}\times \\
    & \times \left\{ \frac{\Gamma\left(p+1\right)}{\Gamma\left(p-k+\beta\right)} - \frac{\Gamma\left(p-k\right)}{\Gamma\left(p-2k+\beta-1\right)}\right\} -\\
    & - \frac{1}{(k+1)^r},
\end{align*}
while if $j > k$ we get
\begin{align*}
    H & = \frac{1}{(k+\beta+1)^r \times\Gamma\left(k+\beta+2\right)}\times\\
    & \times \left\{ \frac{\Gamma\left(p+1\right)}{\Gamma\left(p-k-\beta\right)} - \frac{\Gamma\left(p-k\right)}{\Gamma\left(p-2k-\beta-1\right)}\right\} - \\
    & - \frac{1}{(k+1)^r},
\end{align*}
where $\beta \in \mathbb{Z}^+$. These expressions for $H$ give us additional restrictions, which are that $p-2k+\beta-1 > 0$ and $p-2k-\beta-1 > 0$, so that all terms are well-defined. We now show that $H>0 \ \forall k \neq j$ for $j<k$; the second case follows analogously. Let the first term of $H$ be denoted by $H_1$ and denote the second one by $H_2$. It is obvious that $\beta = 1$ minimises $H_1$. Thus, we have
\begin{align*}
    H & = \frac{1}{k^r \times \Gamma(k+1)} \left\{\frac{\Gamma(p+1)}{\Gamma(p-k)} - \frac{\Gamma(p-k)}{\Gamma(p-2k)} \right\} - \\
    & - \frac{1}{(k+1)^r}.
\end{align*}
The ratio of $H_1$ over $H_2$ is
\begin{align*}
    \frac{H_1}{H_2} = \frac{(k+1)^r \times Q(k)}{k^r \times \Gamma(k+1)} = \frac{(1+1/k)^r \times Q(k)}{\Gamma(k+1)},
\end{align*}
where $Q(k) = \Gamma(p+1)/\Gamma(p-k) - \Gamma(p-k)/\Gamma(p-2k)$. It is also easy to see that $Q(k)$ is increasing for $k$, therefore its minimum is attained when $k=2$. In that case, we have
\begin{align*}
    \frac{H_1}{H_2} & = \frac{(k+1)^r \times Q(k)}{k^r \times \Gamma(k+1)}\\
    & = \frac{(1+1/k)^r \times Q(k)}{\Gamma(k+1)}\\
    & \geq \frac{(3/2)^r \times Q(2)}{2} > 1.
\end{align*}
Thus, $H>0 \ \forall k \neq j$. As a result activating $\alpha_k$ is not preferred as it leads to a positive loss value.\\

Thus, we have shown the following:
\begin{enumerate}
    \item The greatest $\mathcal{A}$ value for $p < 2^{r+1}+1$ is achieved when all the $\alpha_i$'s are equal to zero.
    \item When $p \geq 2^{r+1}+1$, $\mathcal{A}$ is maximised when we are on the boundary of the solution space.
    \item If $p \geq 2^{r+1}+1$ and we are on the boundary of the solution space with just one $\alpha_j$ being equal to $p$, then activating any other parameter $\alpha_k$ (where $j\neq k$) so that it is not longer equal to zero, leads to a lower value of $\mathcal{A}$.
\end{enumerate}
Given the above, we can conclude that the optimal solution for $p < 2^{r+1}+1$ is attained when all parameters are equal to zero, while for $p \geq 2^{r+1}+1$, the optimal solution is on the boundary of the solution space with just one parameter $\alpha_i$ being non-zero and thus equal to $p$, as required.
\end{proof}

Given Proposition \ref{prop:maxdiscscore_expression}, we can now proceed to find what the maximum possible score of nominal outlyingness is. This is derived in Proposition \ref{prop:maxdiscscore}.

\begin{proposition}\label{prop:maxdiscscore}
    Let $\maxlen = p$ and assume the minimum support threshold value is given by its maximum value so that $\sigma_d/\operatorname{supp}(d) = n-\delta$, where $\delta \rightarrow 1$ for all itemsets $d$. The maximum value of the score of nominal outlyingness $s(\boldsymbol{x}_i)$ for the $i$th observation of a data set with $p$ nominal variables is given by $p(n-1)$ for $p < 2^{r+1}+1$, and by the following expression for $p \geq 2^{r+1}+1$
    \begin{equation*}
    s^\mathrm{max}(\boldsymbol{x}_i) = (n-1) 
      \frac{\dbinom{p}{\min\left\{\maxlen, \left\lfloor \frac{p-r}{2} \right\rfloor \right\}}}{\left(\min\left\{\maxlen, \left\lfloor \frac{p-r}{2} \right\rfloor \right\}\right)^r},
    \end{equation*}
    where $\lfloor \cdot \rfloor$ is the floor function (i.e. $\lfloor x \rfloor$ is the largest integer that is smaller than or equal to $x$).
\end{proposition}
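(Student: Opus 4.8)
The plan is to build directly on Proposition \ref{prop:maxdiscscore_expression}. That proposition shows the extremal configuration consists of all itemsets of a single common length $k$ appearing exactly once, each attaining the largest admissible threshold $\sigma_d = n-1$. There are $\binom{p}{k}$ itemsets of length $k$, and by Expression \eqref{eq:discretescoreinfrequent} each contributes $(n-1)/k^r$ to the score. Hence the maximal score equals $(n-1)\,g(k)$ maximised over admissible lengths $1 \le k \le \maxlen$, where
\begin{equation*}
g(k) = \frac{\dbinom{p}{k}}{k^r}.
\end{equation*}
The problem is thereby reduced to a one-dimensional optimisation: locate the integer maximiser $k^\star$ of $g$.

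For $p < 2^{r+1}+1$, Proposition \ref{prop:maxdiscscore_expression} already places the optimum at the vertex where all $\alpha_i$ vanish, i.e. at $k = 1$; then $g(1) = \binom{p}{1} = p$ and the maximal score is $p(n-1)$, the first claimed value. For the limit $p \to \infty$ I would study the ratio
\begin{equation*}
\frac{g(k+1)}{g(k)} = \frac{p-k}{k+1}\left(\frac{k}{k+1}\right)^r ,
\end{equation*}
first establishing that $g$ is unimodal so that its peak is the unique index at which this ratio drops below one. Equating the ratio to one gives $(p-k)k^r = (k+1)^{r+1}$, i.e. $\tfrac{p-k}{k+1} = (1+1/k)^r$; expanding $(1+1/k)^r = 1 + r/k + O(k^{-2})$ and solving asymptotically yields a crossing point at $k^\star = (p-r)/2 + O(1)$. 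A second-order comparison of the ratio at $k = \lfloor (p-r)/2\rfloor$ with its two neighbours then fixes the integer maximiser at exactly $\lfloor (p-r)/2\rfloor$.

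The $\maxlen$ restriction is imposed last. Because $g$ increases up to its peak and decreases afterwards, the maximiser over $\{1,\dots,\maxlen\}$ is the peak itself when the peak does not exceed $\maxlen$, and is $\maxlen$ otherwise; that is, $k^\star = \min\{\maxlen, \lfloor (p-r)/2\rfloor\}$. Substituting this $k^\star$ into $(n-1)g(k)$ yields the stated closed form.

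I expect the main obstacle to be the asymptotic step of pinning the \emph{integer} maximiser to exactly $\lfloor (p-r)/2\rfloor$ rather than an adjacent integer, since the leading-order expansion locates the crossing only to within $O(1/p)$; resolving the tie, especially when $p-r$ is odd, requires the next-order term of $(1+1/k)^r$. A secondary point is the unimodality of $g$, as the ratio above need not be monotone for small $k$ — indeed $g(2) > g(1)$ exactly when $p > 2^{r+1}+1$, which is why that threshold reappears — so one must check that the ratio crosses one only once from above.
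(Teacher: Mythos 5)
Your proposal is correct and follows essentially the same route as the paper: both reduce the problem via Proposition \ref{prop:maxdiscscore_expression} to maximising $\binom{p}{k}/k^r$ over admissible integer lengths $k$, locate the asymptotic maximiser at $\lfloor (p-r)/2\rfloor$ by comparing consecutive terms (your ratio $g(k+1)/g(k)$ is just a repackaging of the paper's two difference inequalities and yields the identical boundary condition $p = k + (k+1)(1+1/k)^r$), and then cap by $\maxlen$ using monotonicity up to the peak. The obstacles you flag — unimodality and the integer tie-break when $p-r$ is odd — are precisely the points the paper itself treats only informally, so your plan is at the same level of rigour as the published argument.
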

\begin{proof}
    As proven in Proposition \ref{prop:maxdiscscore_expression},
    %\textcolor{red}{if we assume that the largest possible value of $\sigma_d/\operatorname{supp}(d)$ is given by $n-1$,} 
    under the given assumptions, the score of nominal outlyingness is maximised when we only have highly infrequent (or frequent) itemsets of unit length, for $p < 2^{r+1} + 1$. In this case, it is clear that the total number of such itemsets is equal to $p$, thus the score becomes equal to $(n-1)p$.\\
    
    For the case of $p \geq 2^{r+1}+1$, as previously shown, the largest possible score can be achieved on the boundary of the solution space; that is for all itemsets of a specific length $k \leq p$ having unit support. Therefore, we seek to find
    $$k^* = \argmax\limits_{1 \leq k \leq p} \frac{C^{p}_{k}}{k^r}.$$
    The following two conditions are thus obtained by considering $k = k^*-1$ and $k = k^* + 1$
    \begin{align}
        \dbinom{p}{k^*+1}\frac{1}{\left(k^*+1\right)^r}-\dbinom{p}{k^*}\frac{1}{\left(k^*\right)^r} & < 0, \label{eq:ineq_maxdiscscore_1}\\
        \dbinom{p}{k^*}\frac{1}{\left(k^*\right)^r} - \dbinom{p}{k^*-1}\frac{1}{\left(k^*-1\right)^r} & > 0 \label{eq:ineq_maxdiscscore_2}.
    \end{align}
    Starting with Expression \eqref{eq:ineq_maxdiscscore_1}, we have
    \begin{multline*}
        \dbinom{p}{k^*+1}\frac{1}{\left(k^*+1\right)^r}-\dbinom{p}{k^*}\frac{1}{\left(k^*\right)^r} =\\
         = \frac{p!}{\left( k^* + 1 \right)!\left(p - k^* - 1\right)!}\frac{1}{\left(k^*+1 \right)^r} - \\
         - \frac{p!}{\left( k^* \right)!\left(p - k^* \right)!}\frac{1}{\left(k^* \right)^r}\\
         = \frac{p!}{\left(k^*\right)!\left(p-k^*-1\right)!\left(k^*\right)^r} \times \\
         \times \left\{\frac{1}{\left(k^*+1\right)\left(1+\frac{1}{k^*}\right)^r}-\frac{1}{p-k^*} \right\},
    \end{multline*}
    so it suffices to show that the expression inside the curly brackets is negative.
    $$\begin{aligned}
        \frac{1}{\left(k^*+1\right)\left(1+\frac{1}{k^*}\right)^r}-\frac{1}{p-k^*} =\\
        = \frac{\left(p-k^* \right)-\left( k^*+1\right)\left(1+\frac{1}{k^*}\right)^r}{\left(k^*+1 \right)\left(1+\frac{1}{k^*}\right)^r\left(p-k^*\right)}\\
    \end{aligned},$$
    so we show the numerator is negative, since the denominator is strictly positive. This gives
    \begin{align}
        & \left(p-k^* \right)-\left( k^*+1\right)\left(1+\frac{1}{k^*}\right)^r < 0 \nonumber\\
        \iff & p < k^* + \left( k^*+1\right)\left(1+\frac{1}{k^*}\right)^r. \label{eq:ineq1_maxdscore}
    \end{align}
    We proceed similarly to derive an additional bound based on Expression \eqref{eq:ineq_maxdiscscore_2}. More precisely
    \begin{multline*}
        \dbinom{p}{k^*}\frac{1}{\left(k^*\right)^r} - \dbinom{p}{k^*-1}\frac{1}{\left(k^*-1\right)^r} = \\
        = \frac{p!}{\left( k^* \right)!\left(p - k^* \right)!}\frac{1}{\left(k^* \right)^r} - \\
        - \frac{p!}{\left( k^* - 1 \right)!\left(p - k^* + 1\right)!}\frac{1}{\left(k^*-1 \right)^r}\\
        = -\frac{p!}{\left(k^*-1\right)!\left(p-k^*\right)!\left(k^*-1\right)^r}\times \\
        \times \left\{\frac{1}{p-k^*+1} - \frac{\left(1-\frac{1}{k^*} \right)^r}{k^*}\right\},
    \end{multline*}
    so since the first term is strictly positive and the product must be positive, we require that the expression inside the curly brackets is negative. Re-arranging gives
    \begin{align*}
        \frac{1}{p-k^*+1} - \frac{\left(1-\frac{1}{k^*} \right)^r}{k^*} =\\
        \frac{k^*-\left(p-k^*+1 \right) \left(1-\frac{1}{k^*} \right)^r}{\left(p-k^*+1 \right)k^*},
    \end{align*}
    thus we require that the numerator is negative, since the denominator is strictly positive. Hence, we have
    \begin{align}
        & k^*-\left(p-k^*+1 \right) \left(1-\frac{1}{k^*} \right)^r < 0 \nonumber\\
        \iff & p > \frac{k^*}{\left(1-\frac{1}{k^*}\right)^r} + k^* - 1. \label{eq:ineq2_maxdscore}
    \end{align}
    Thus, $k^*$ should be an integer such that Expressions \eqref{eq:ineq1_maxdscore} \& \eqref{eq:ineq2_maxdscore} are satisfied. We plot the above two bounds of $p$ that we obtained for varying $r$ in Figure \ref{fig:ineq_region_plots} to get an idea of what the region of solutions looks like and include a zoomed in version for the case of $r=2$ just to get a better understanding of what is going on in Figure \ref{fig:ineq_region_wzoom2}. 
    %\textcolor{red}{We will show that as $p \rightarrow \infty$, the solution can be approximated by $k^* = \lfloor \frac{p-r}{2} \rvert$, where $\lfloor \cdot \rfloor$ is the floor function. In order to motivate this, we plot the above two bounds of $p$ that we obtained for varying $r$ in Figure \ref{fig:ineq_region_plots} to get an idea of what the region of solutions looks like. We also include a zoomed in version for the case of $r=2$ just to get a better understanding of what is going on in Figure \ref{fig:ineq_region_wzoom2}.}
    \begin{figure}[h!]
        \centering \centerline{\includegraphics[width=\columnwidth]{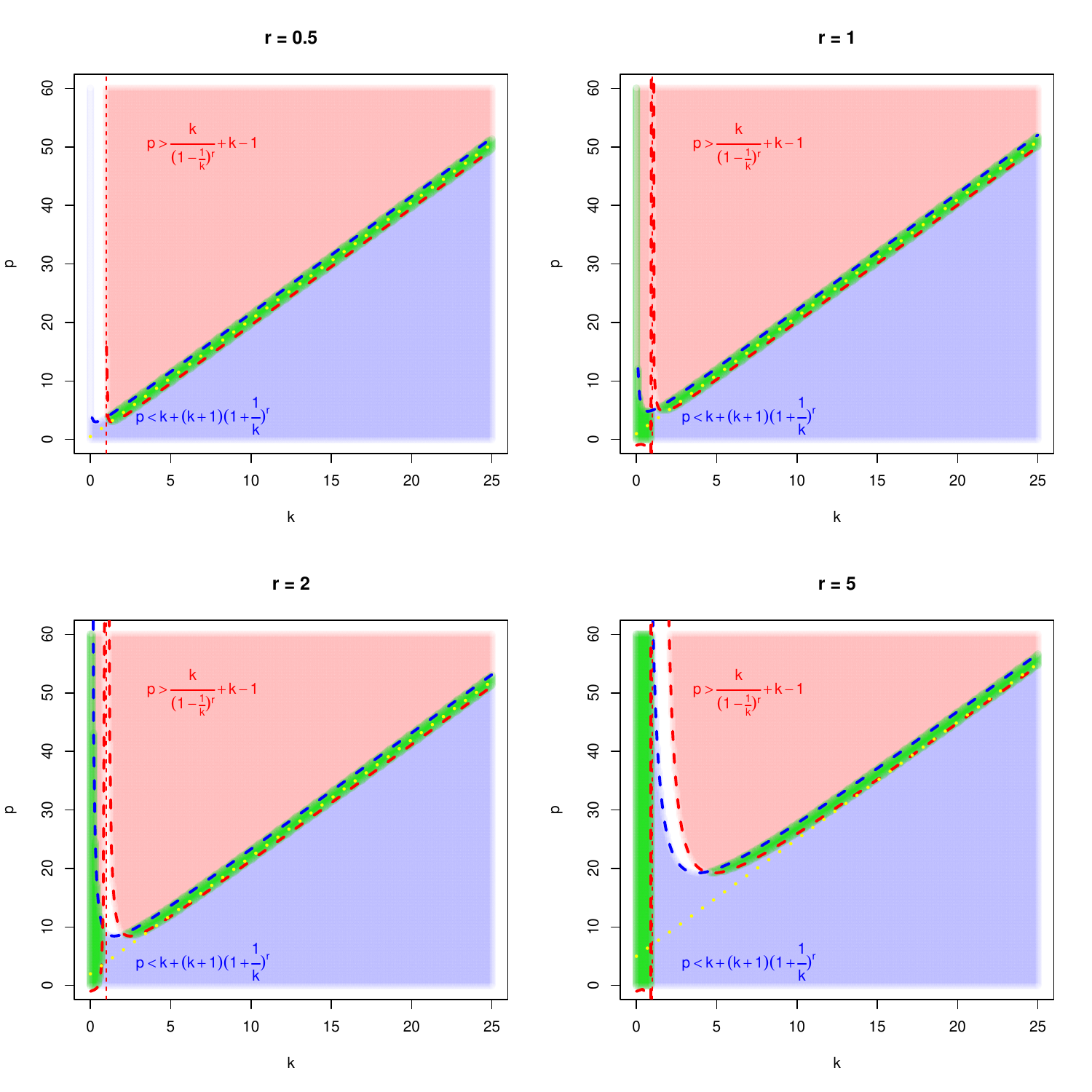}}
        \caption{Plot of bounds of $p$ for varying $r$. The blue shaded region corresponds to the upper bound from Expression \eqref{eq:ineq1_maxdscore} and the red shaded region corresponds to the lower bound from Expression \eqref{eq:ineq2_maxdscore}. The green shaded region is the region where both these bounds are satisfied. The vertical dashed line at $k=1$ is the asymptote of the expression for the lower bound. The yellow dotted line is given by $p = 2k + r$; this is the line approximating the integer solutions in the green shaded region as $p \rightarrow \infty$.}
        \label{fig:ineq_region_plots}
    \end{figure}
    \begin{figure*}[h!]
        \centering \centerline{\includegraphics[width=\textwidth]{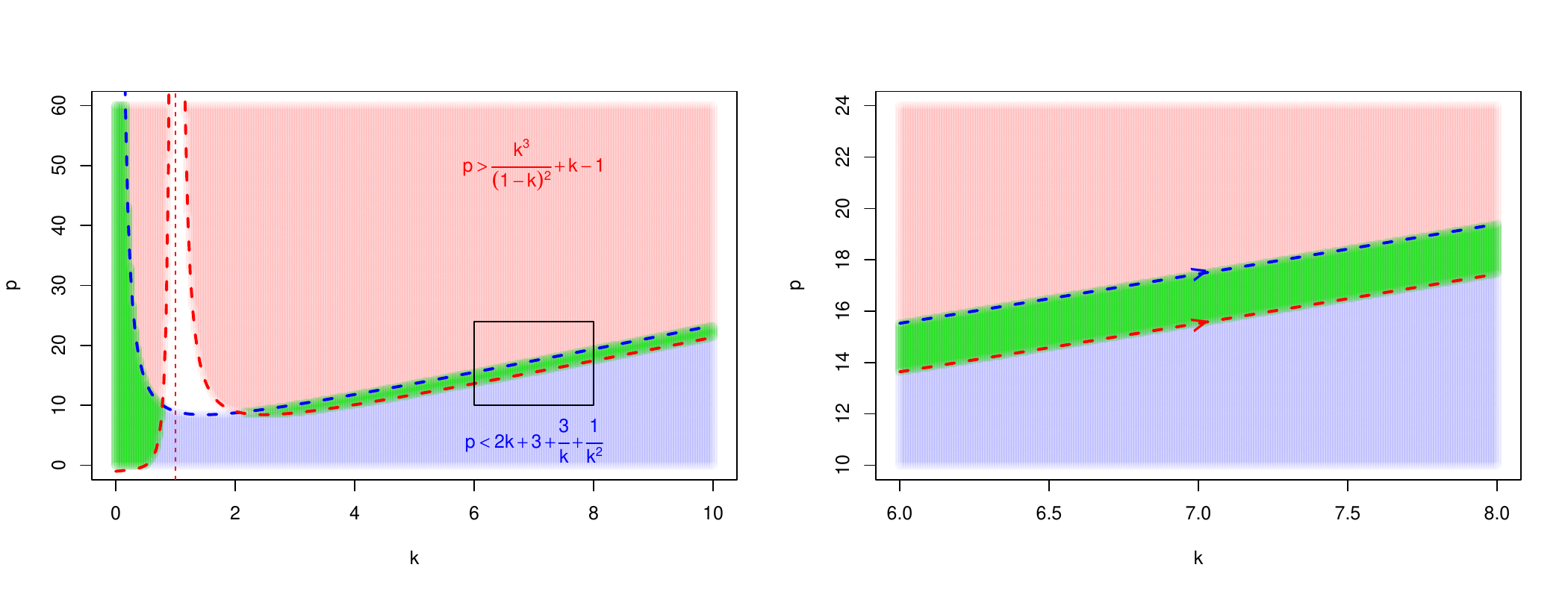}}
        \caption{Plot of bounds of $p$ for $r=2$. The rectangular region $[6,8]\times[10,24]$ on the left subplot is zoomed in on the right, illustrating how the 2 boundary curves become parallel as $p$ increases.}
        \label{fig:ineq_region_wzoom2}
    \end{figure*}
    The point of intersection of the boundary curves corresponds to $ p = 2^{r+1}+1$ and the two boundary curves become parallel as $k \rightarrow +\infty$. We validate this by computing gradients below
    \begin{multline*}
        \frac{\mathrm{d}\left( k + \left( k+1\right)\left(1+\frac{1}{k}\right)^r \right)}{\mathrm{d}k} = \\
         = 1 + \left( 1 + \frac{1}{k} \right)^r - r\frac{k+1}{k^2}\left( 1 + \frac{1}{k}\right)^{r-1} \overset{k\rightarrow \infty}{\longrightarrow} 2, \\
        \frac{\mathrm{d}\left( \frac{k}{\left(1-\frac{1}{k}\right)^r} + k - 1 \right)}{\mathrm{d}k} =\\
         = 1 + \left(1 - \frac{1}{k}\right)^{-r} - \frac{r}{k}\left(1-\frac{1}{k}\right)^{-r-1} \overset{k\rightarrow \infty}{\longrightarrow} 2.
    \end{multline*}
    Therefore, we can access the set of solutions by considering the line $p = 2k+\alpha$ and evaluating the value of $\alpha$. The difference between the two boundary curves is
    \begin{align*}
        k + \left( k+1\right)\left(1+\frac{1}{k}\right)^r - \frac{k}{\left(1-\frac{1}{k}\right)^r} + k - 1 =\\
        = \frac{(k+1)(k^2-1)^r-k^{2r+1}}{k^r(k-1)^r} + 1 \overset{k\rightarrow \infty}{\longrightarrow} 2.
    \end{align*}
    Thus, for large $k$ the vertical distance between the two boundary curves is equal to two units. Their midpoint is just one unit away from each curve, thus we require
    $$\begin{aligned}
        k + (k+1)\left(1 + \frac{1}{k}\right)^r - 2k - \alpha \overset{k\rightarrow \infty}{\longrightarrow} 1,\\
    \end{aligned}$$
    which is satisfied for $\alpha=r$. This is shown below using the binomial series expansion of $(1+1/k)^r$
    \begin{align*}
        k + (k+1)\left(1 + \frac{1}{k}\right)^r - 2k - \alpha =\\
        = k + (k+1) \left(1 + \frac{r}{k} + \mathcal{O}(k^{-2})\right) - 2k - \alpha \\
        = k + k + r + 1 + \mathcal{O}(k^{-1}) - 2k - \alpha\\
        = r + 1 - \alpha + \mathcal{O}(k^{-1}) \overset{k\rightarrow \infty}{\longrightarrow} 1,
    \end{align*}
    giving $\alpha = r$, as required. 
    %\textcolor{red}{This means that for large $k$ values, we can compute $k^*$ by accessing the region where the bounds for $p$ are satisfied via the line $p = 2k+r$. We can only use this line to determine $k^*$ as long as $p$ is large enough. Moreover,} 
    Notice that using this gives $k = (p-r)/2$, which means that for non-integer or odd integer values of $p-r$ we do not get an integer value for $k$. It is easy to show that for these cases, the only integer $k^*$ such that the bounds for $p$ are satisfied is $k^*= \lfloor (p-r)/2 \rfloor$, where $\lfloor \cdot \rfloor$ is the floor function ($\lfloor x \rfloor$ is the greatest integer $x'$ such that $x' \leq x$).\\
    
    Finally, as $\maxlen \geq 1$, the maximum score for $p < 2^{r+1} + 1$ is equal to $p$, while for $p \geq 2^{r+1} + 1$, the maximum score is equal to the expression that we mentioned at the beginning of the proof, evaluated at $k=k^*$. Since $\maxlen$ can be less than $k^*$ and given that the maximum score expression is increasing for $k \in \mathbb{Z}^+_{\leq k^*}$, the maximum score of nominal outlyingness is attained for $k = \min\{\lfloor (p-r)/2 \rfloor, \maxlen\}$.
\end{proof}

\section{Applications}\label{sec:applications}

In this section we illustrate the validity of our proposed method via a simulation study including synthetic data sets. Moreover, we apply our algorithm on four publicly available data sets from the UCI Machine Learning Repository \citep{uci} and compare its performance with that of two state-of-the-art algorithms from the frequent pattern mining literature.

\subsection{Simulation Study}

In order to illustrate the validity of our method, we conduct a simulation study on artificial data. Our proposed framework, %\textcolor{red}{which we denote by SONO (Scores Of Nominal Outlyingness)}
SONO, is benchmarked against two frequent pattern mining algorithms, specifically Frequent Pattern Outlier Factor (FPOF) \citep{he2004frequent} and Frequent Pattern Isolation (FPI) \citep{kuchar2018spotlighting}. These methods present certain similarities with SONO in the way they define outliers, yet they require specifying values for $\maxlen$ and for the minimum relative support threshold. The minimum relative support threshold is analogous to the expression $\sigma_d/n$ under our framework, where $n$ denotes the number of observations in the data.

We generate our artificial data set $\boldsymbol{X}$ using the model below
\begin{align*}
    \boldsymbol{X}_1 \sim \text{Categorical}\left(q_1, \underbrace{\frac{1-q_1}{\ell-1}, \ldots ,\frac{1-q_1}{\ell-1}}_{(\ell-1) \text{ times}}\right), \\
    \boldsymbol{X}_2 \sim \text{Categorical}\left(q_2, \underbrace{\frac{1-q_2}{\ell-1}, \ldots ,\frac{1-q_2}{\ell-1}}_{(\ell-1) \text{ times}}\right),\\
    \boldsymbol{X}_3,\ldots, \boldsymbol{X}_{p-1}, \boldsymbol{X}_p  \overset{\text{i.i.d.}}{\sim} \text{Categorical}\left( \underbrace{\frac{1}{\ell}, \ldots ,\frac{1}{\ell}}_{\ell \text{ times}}\right).
    \end{align*}
In the above, $\boldsymbol{X}_j$ denotes the $j$th variable, $p$ is the number of variables and $\ell$ is the number of categorical levels. The last $p-2$ variables are essentially sampled from a discrete uniform distribution, while the first level for the first two variables occurs with probabilities $q_1$ and $q_2$, respectively. We set $q_1=0.10$ and $q_2=0.05$ in order to have one level being significantly less frequent than the rest. We vary the values of $n$, $p$ and $\ell$ and allow these to take values $n = 200, 500, 1000$, $p = 3, 5, 7$, and $\ell = 2, 3, 5$. This leads to 27 different scenarios, simulated a hundred times each (using a different seed number), thus yielding 2700 artificial data sets.

For each data set, we run FPOF and FPI with a minimum relative support threshold of $1/\ell$ and $\maxlen$ determined by our method using the relevant functions from the \texttt{R} package \texttt{fpmoutliers}. SONO is run using the \texttt{R} package \texttt{SONO} \citep{costa_sono_2025} three times; once assuming equal probabilities across all levels for all variables, once with the true probabilities, and once with probability vectors $\left(q_j/2, (1-q_j/2)/(\ell-1), \ldots, (1-q_j/2)/(\ell-1)\right)^\intercal$,  $j=1, 2$, for variables $\boldsymbol{X}_1, \boldsymbol{X}_2$ and equal probabilities across each level for the remaining $p-2$ variables. We record the average rank for the observations possessing the first level for any of $\boldsymbol{X}_1$ or $\boldsymbol{X}_2$, the estimated value of $\maxlen$ and the average runtime for each of the three algorithms across twenty-five repetitions. Since some observations may end up with identical scores, we use the highest rank to resolve any ties (a higher rank here refers to a larger score of outlyingness). We also subtract the scores obtained with FPOF from a unit for their ranks to be comparable to these computed using SONO and FPI (FPOF assigns higher scores to inliers). We finally set $r=1$ to be more consistent with FPOF and FPI and $\alpha = 0.05$ when running SONO.

In Figure \ref{fig:medrank_nobs_nvars} we can see the median rank for observations for which $\boldsymbol{X}_1 = 1$ or $\boldsymbol{X}_2=1$, as obtained by FPI, FPOF and SONO for a varying number of variables and observations. For the latter method, the three different cases related to the assumed probabilities are also included. When SONO is given the true probabilities, the median rank is constantly equal to a unit; this comes as no surprise since all scores of outlyingness are equal to zero, thus all observations get the exact same rank of one. Incorporating this contextual information and thus not flagging any observation as potentially outlying is not possible with FPI and FPOF, with both algorithms consistently treating the observations of interest as potential outliers. SONO has also managed to flag these and has even assigned them a higher rank than FPI and FPOF under the assumption of equal level probabilities. On the other hand, when the assumed probabilities for the first levels of $\boldsymbol{X}_1$ and $\boldsymbol{X}_2$ are smaller than the true ones, we see that the median rank is much lower and that is because the concept of infrequent itemsets is no longer applicable, further highlighting the validity of our method.

We finally report the median runtime in seconds for our experiments, which heavily depends on the value $\maxlen$, in Table \ref{tab:maxlen_tab}. The reported times are only for the case of probability levels being set equal across each variable to ensure a fair comparison between the three methods. As we can see, SONO is quicker than FPI and FPOF when $\maxlen=1$ and consistently faster than FPI for $\maxlen \leq 3$. The median runtime of SONO increases on a large scale as $\maxlen$ grows (potentially due to its current implementation not being fully optimised for speed), yet it does not take extremely long to produce results which are typically more reliable than these of FPOF and FPI, as seen previously. Finally, we notice that a $\maxlen$ value of seven is only attained in 181 out of the 900 data sets consisting of seven variables and only when the true underlying level probabilities are provided. This highlights the efficiency of using support-based pruning, leading to a substantially smaller runtime in approximately 80\% of the data sets with seven variables.

\begin{figure*}[h!]
  \centering
  \includegraphics[width=\textwidth]{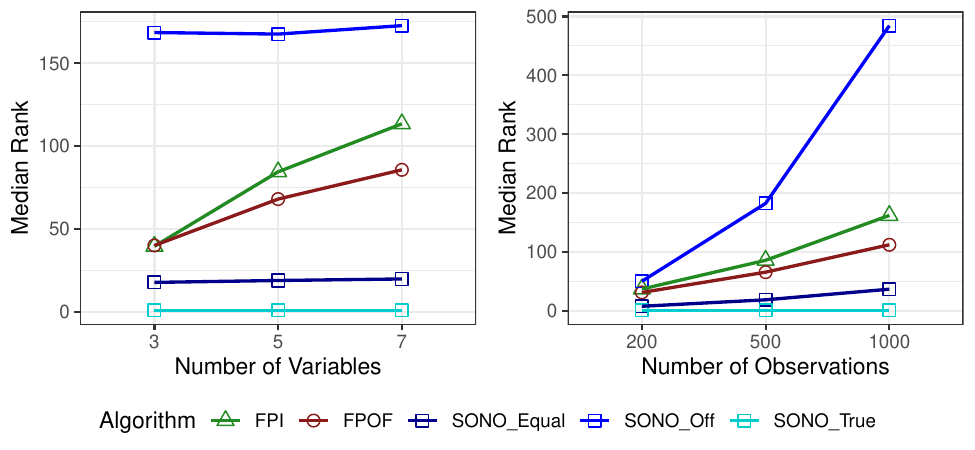}
  \caption{Median rank of observations for which $\boldsymbol{X}_1 = 1$ or $\boldsymbol{X}_2=1$ using FPI, FPOF and SONO with equal probabilities across levels, the true level probabilities and with half the true probabilities for the first levels of $\boldsymbol{X}_1, \boldsymbol{X}_2$.}
  \label{fig:medrank_nobs_nvars}
\end{figure*}
\begin{table}[h!]
\begin{tabular}{r|*{3}{c}}
\backslashbox{MAXLEN}{Algorithm}
& SONO & FPOF & FPI \\\hline
1 & \textbf{0.03} & 0.04 & 1.18\\
2 & 0.67 & \textbf{0.08} & 1.19 \\
3 & 0.54 & \textbf{0.08} & 1.18 \\
4 & 1.69 & \textbf{0.04} & 1.02 \\
5 & 3.75 & \textbf{0.08} & 1.19 \\
6 & 24.2 & \textbf{0.14} & 1.29 \\
\end{tabular}
\caption{Median runtime in seconds for SONO, FPI and FPOF for MAXLEN values of one up to six, assuming equal probabilities across variable levels. Smallest times per $\maxlen$ value are bolded.}
\label{tab:medruntime_maxlen}
\end{table}
\subsection{Applications on publicly available data}
We implement our method and compare its performance to that of FPOF and FPI on four publicly available data sets from the UCI Machine Learning Repository. The data sets used are the Solar Flare, the Differentiated Thyroid Cancer Recurrence, the Lymphography, and the Wisconsin Breast Cancer data sets. These data sets consist of mainly nominal variables and the few variables of a different type (if any) are removed in the cleaning process, so that our method can be applied. Notice that these data sets have previously been used for classification purposes \citep[see for example][]{Michalski1986TheMI, Clark1987InductionIN, bradshaw1992forecasting, akay2009support, borzooei2024machine}. The Lymphography and Wisconsin Breast Cancer data sets have also been used for anomaly detection purposes in multiple instances \citep[for instance in][among others]{he2004frequent, lazarevic2005feature, zimek2013subsampling, aggarwal2015theoretical, kuchar2018spotlighting}.

The number of observations for each data set upon cleaning, as well as the number of variables used for each data set are summarised in Table \ref{tab:maxlen_tab}, together with the value of $\maxlen$ as determined by SONO. Notice that we set $\alpha = 0.05$ and $r = 1$ for SONO, while the probabilities are set to be equal across variable categories. Following the implementations of FPOF and FPI on the Lymphography and Wisconsin Breast Cancer data sets from the papers that introduced these algorithms, we set the minimum relative support threshold to be $0.1$. The value of $\maxlen$ is set to be the same as the one estimated by SONO for each data set. We also report the proportion of outliers as a percentage for each data set, assuming that instances from the minority class are outlying for the Thyroid Cancer, Breast Cancer and Lymphography data sets. For the Solar Flare data set, observations corresponding to moderate or severe solar flares are taken to be the outliers.

\begin{table*}[h!]
\centering
\begin{tabular}{l|c|c|c|c}
Data set & Observations & Variables & $\maxlen$ & Proportion of Outliers (\%)\\ \hline
Solar Flare & 1389 & 10 & 6 & 5.26\%\\
Thyroid Cancer & 383 & 13 & 4 & 28.20\%\\
Wisconsin Breast Cancer & 683 & 9 & 1 & 34.99\%\\
Lymphography & 148 & 18 & 4 & 4.05\%\\
\end{tabular}
\caption{Number of observations upon data cleaning, number of variables used for nominal outlier detection, $\maxlen$ values and proportion of outliers for publicly available data sets used.}
\label{tab:maxlen_tab}
\end{table*}
Following the recommendations of \cite{campos2016evaluation}, we compare the performance of the three algorithms using three evaluation measures. The first one we use is the Detection Rate at Top $K\%$, defined as the proportion of outliers detected within the observations with the top $K\%$ of outlyingness scores. This is analogous to the Recall measure, so we denote this by $\mathcal{R}(K)$. Assuming $O$ is the set of outliers, $\mathcal{R}(K)$ is defined as follows
$$\mathcal{R}(K) = \frac{ \lvert \{o \in O: \text{rank}(o) \leq \lceil Kn/100\rceil \} \rvert}{\lvert O \rvert},$$
where $\text{rank}(o)$ denotes the rank of an outlying instance $o \in O$ in a sorted list of outlyingness scores (ranked in descending order, with a rank of one being the most outlying).

We also compute the average rank for the outlying observations and Area Under the Receiver Operating Characteristic Curve (ROC AUC) value for each method. The ROC AUC value for a list of ranked scores is computed as follows
\begin{align*}
    \text{ROC AUC} & = \frac{1}{\lvert O \rvert\times \lvert I\rvert}\sum_{o \in \lvert O \rvert}\sum_{i \in \lvert I \rvert}\mathcal{S}(o, i), \\
    \mathcal{S}(o, i) & = \left\{
    \begin{array}{ll}
          1, & \text{if rank}(o) > \text{rank}(i) \\
          \frac{1}{2}, & \text{if rank}(o) = \text{rank}(i)\\
          0, & \text{if rank}(o) < \text{rank}(i) \\
    \end{array} 
    \right. ,
\end{align*}
where $I$ is the set of inliers \citep{hanley1982meaning}. A value closer to one indicates perfect performance, as this implies that most outliers are ranked higher than most of the inliers. In contrast, a value closer to zero occurs when an algorithm assigns higher outlyingness scores to inliers. We report $\mathcal{R}(K)$, the average rank and the ROC AUC for the $K\%$ top-ranked observations for each data set and each algorithm in Table \ref{tab:avg_ranks} and Figures \ref{fig:Rk_uci} \& \ref{fig:rocauc_topK_UCI} for several values of $K$.

\begin{table}[h!]
\centering
\begin{tabular}{l|c|c|c}
Data set & SONO & FPOF & FPI\\ \hline
Solar Flare & 377.48 & 348.34 & \textbf{321.84} \\
Thyroid Cancer & 101.74 & \textbf{92.04} & 99.45\\
Wisconsin Breast Cancer & \textbf{116.63} & 125.67 & 126.08 \\
Lymphography & \textbf{4.00} & 4.50 & \textbf{4.00} \\
\end{tabular}
\caption{Average rank of outlying observations in four UCI data sets based on the scores of outlyingness computed using the SONO, FPOF and FPI algorithms. Values are presented in two decimal places and a lower value indicates a higher ranking. Highest average rankings for each data set are bolded.}
\label{tab:avg_ranks}
\end{table}

\begin{figure*}[h!]
  \centering
  \includegraphics[width=\textwidth]{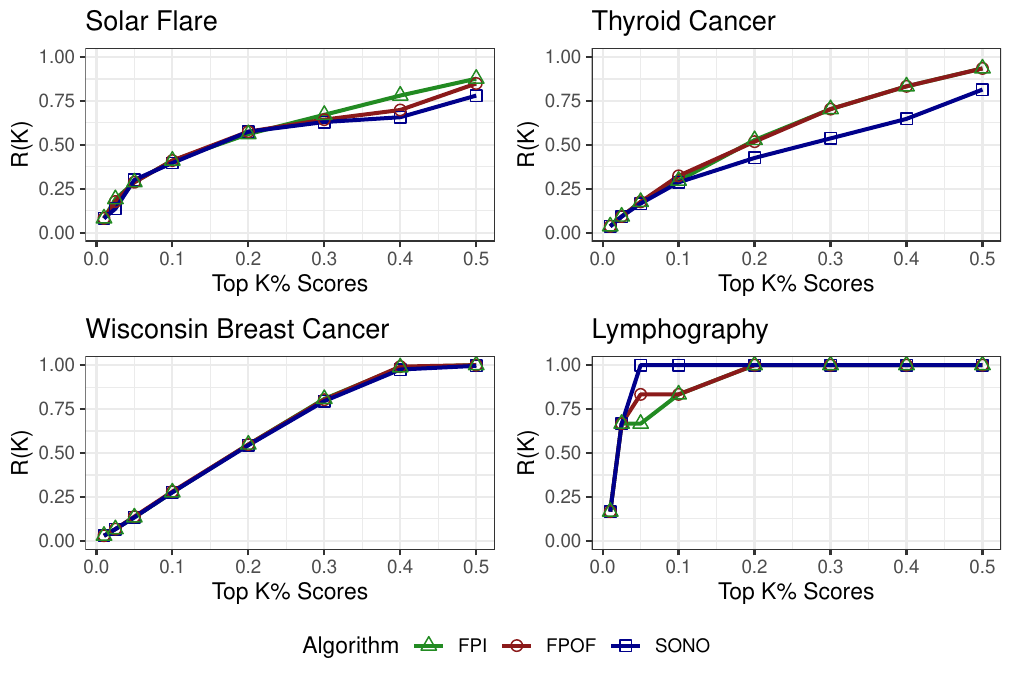}
  \caption{Detection Rate at Top $K\%$ ($\mathcal{R}(K)$) for observations with the top $K\%$ scores of outlyingness for FPI, FPOF and SONO algorithms on the Solar Flare, Thyroid Cancer, Wisconsin Breast Cancer and Lymphography data sets. The values of $K$ presented are 1, 2.5, 5 and values from 10 up to 50, in steps of 10.}
  \label{fig:Rk_uci}
\end{figure*}

\begin{figure*}[h!]
  \centering
  \includegraphics[width=\textwidth]{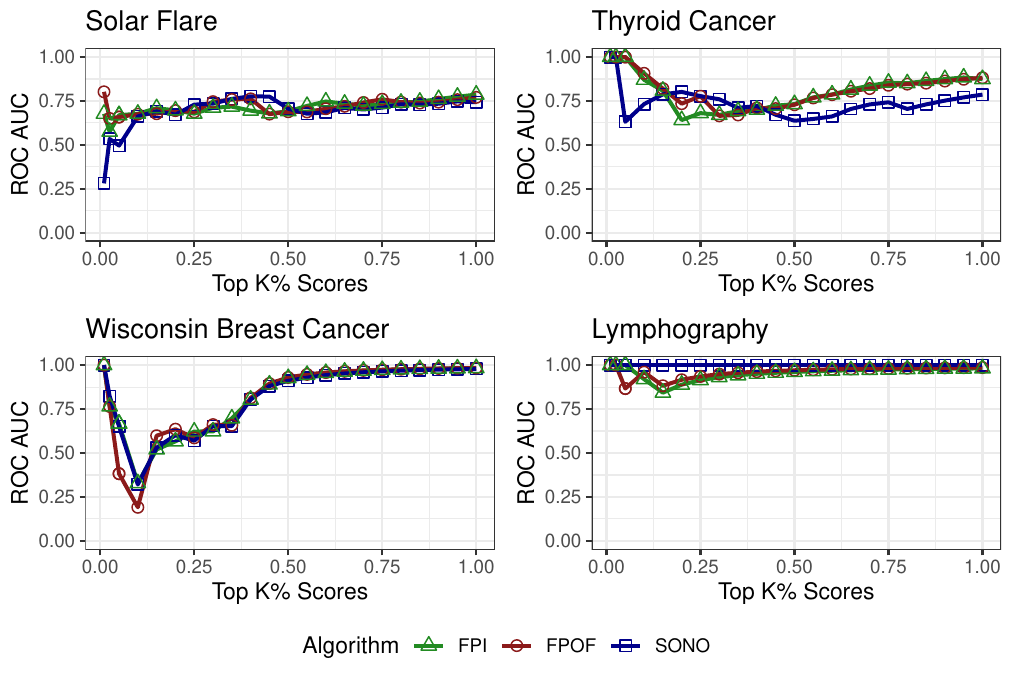}
  \caption{ROC AUC for observations with the top $K\%$ scores of outlyingness for FPI, FPOF and SONO algorithms on the Solar Flare, Thyroid Cancer, Wisconsin Breast Cancer and Lymphography data sets. The values of $K$ presented are 1, 2.5 and values from 5 up to 100, in steps of 5.}
  \label{fig:rocauc_topK_UCI}
\end{figure*}

Based on the average rank of outliers in Table \ref{tab:avg_ranks}, we see that SONO outperforms FPOF and FPI on the Wisconsin Breast Cancer data, while it achieves better performance than FPOF on the Lymphography data set. The $\maxlen$ values for Thyroid Cancer, Wisconsin Breast Cancer, and Lymphography are also considerably smaller than the number of variables. This makes the search for infrequent itemsets less computationally intensive without necessarily affecting performance.

Despite not being able to do as well as its competitors on the Solar Flare and the Thyroid Cancer data sets, SONO has managed to recover a good amount of outliers, as can be seen in Figure \ref{fig:Rk_uci}. In fact, for the Solar Flare data set, we see that the Detection Rate $\mathcal{R}(K)$ is comparable to that of FPOF or FPI for $K \leq 30\%$. 
%\textcolor{red}{Similarly for the Thyroid Cancer Data set, the largest scores of outlyingness obtained with each of the three methods correspond in their majority to outliers.} 
We also observe an almost identical performance of the three methods on the Wisconsin Breast Cancer data set, whereas SONO outperforms its competitors on Lymphography.

The ROC AUC for SONO takes very similar values to the ones obtained with FPI and FPOF, as seen in Figure \ref{fig:rocauc_topK_UCI}. A constant ROC AUC score of a unit is achieved by SONO on the Lymphography data set, indicating that all outliers in the data set received higher scores than the inliers. Finally, we highlight that in a practical application, where the proportion of outliers is typically assumed to be less than $50\%$, SONO is a competitive algorithm to the ones presented, being able to correctly detect over 70\% of the outliers.

Overall, SONO exhibits comparable performance to FPI and FPOF. However, it is important to stress that SONO requires far less user input than its competitors, since the value for $\maxlen$ is estimated from the data and none of the minimum support threshold values are input by the user. Different values for the minimum relative support threshold may produce much different results for FPI and FPOF, yet these are not explored here and in fact, no recommendations are made regarding the selection of an optimal value for this hyperparameter.

A final appealing feature of SONO is that we can interpret the scores obtained and see which variables have contributed to these using the additional concepts introduced in Section \ref{sec:proposal}. 
%\textcolor{red}{We will not delve into the details of the itemsets that contributed in the scores presented for each data set but we will give some basic guidelines on how the contribution matrix $\boldsymbol{\mathrm{{C}}}$ and the Nominal Outlyingness Depth can facilitate their interpretation.} 
For instance, the contribution matrices for the outliers of each data set are illustrated in Figure \ref{fig:contrib_mats_uci}. As can be seen, the size of the sun spot area or whether there is any activity in that region in the last twenty-four hours seem to be the most influential variables for a solar flare to be flagged as possibly moderate or severe. Similarly, the stage, metastasis (M) classification, and whether the patient has received radioactive iodine therapy are indicators of thyroid cancer recurrence, while marginal adhesion and mitoses can be used to detect cases of malignant breast tumor. Finally, the rate of lymphatic diminishment and the blockage of lymphatic sinuses can hint normal finds or cases of fibrosis in the Lymphography data set.

We also plot the Nominal Outlyingness Depth against the scores for the outliers in the Solar Flare, Thyroid Cancer and Lymphography data sets, for which $\maxlen$ is larger than one in Figure \ref{fig:nod_sono_flare}. Some observations have a zero depth and score, while most outliers have at least a depth of one and one observation even has a depth of two. 
%\textcolor{red}{The largest score observed for the Solar Flare data set corresponds to a depth of nearly two units, suggesting that contributions from itemsets of both unit and greater lengths have resulted in this substantial score, making this specific observation a probable outlier.}
The estimated values of $\maxlen$ are much higher for all three data sets than the maximum nominal outlyingness depths attained. This hints that the longer itemsets do not contribute in the calculation of the scores; e.g. in the Solar Flare data set, where the maximum depth is equal to a unit. This highlights a drawback of our strategy for selecting $\maxlen$, as calculations for itemsets of length two up to six on Solar Flare are eventually redundant, at least for distinguishing the outliers from the rest of the observations.

\begin{figure}[h!]
  \centering
  \includegraphics[width=\linewidth]{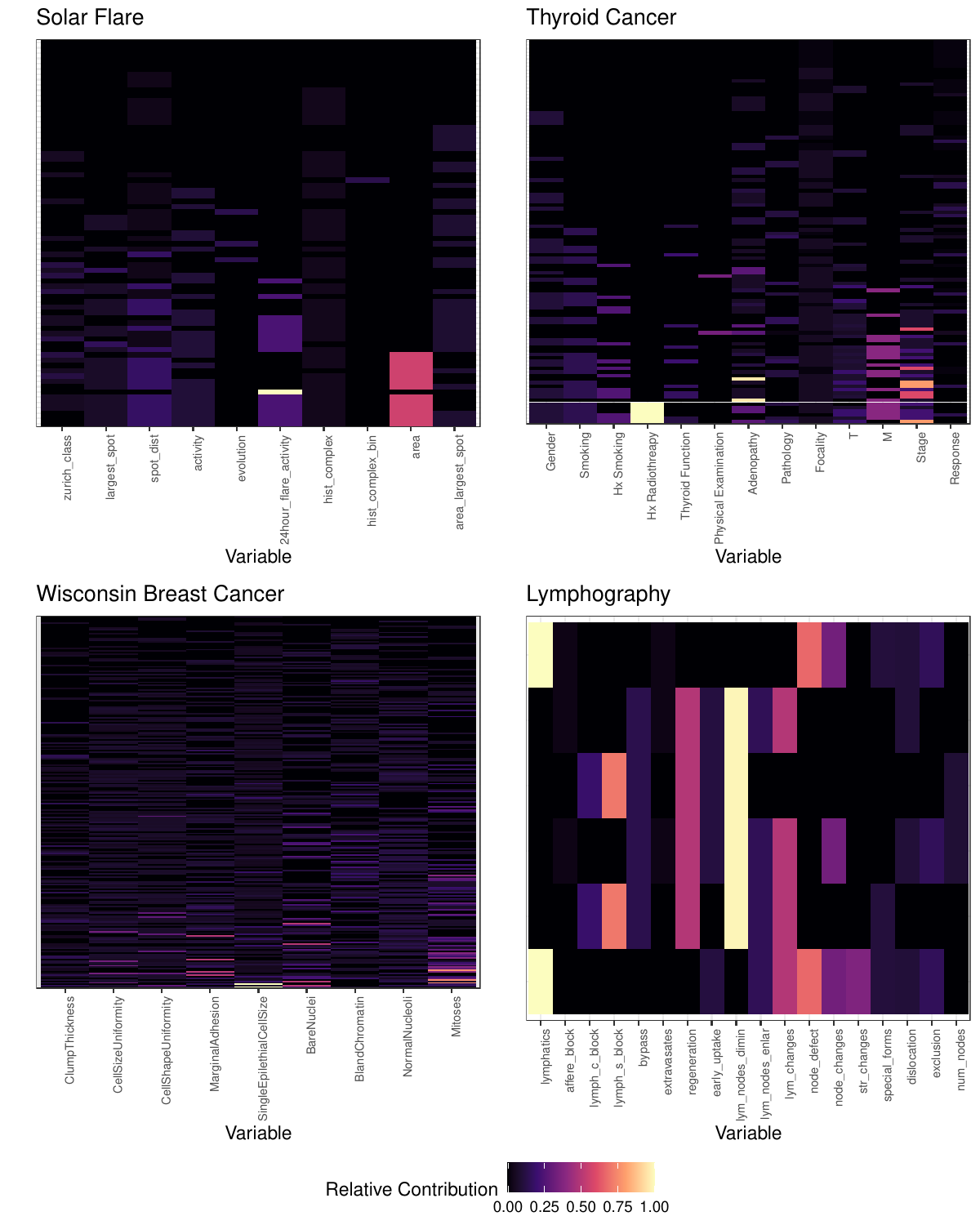}
  \caption{Contribution matrices for UCI data sets; the contributions have been scaled to lie within the unit interval and the relative contribution is presented. Only the outlying observations are presented.}
  \label{fig:contrib_mats_uci}
\end{figure}

\begin{figure}[h!]
  \centering
  \includegraphics[width=\columnwidth]{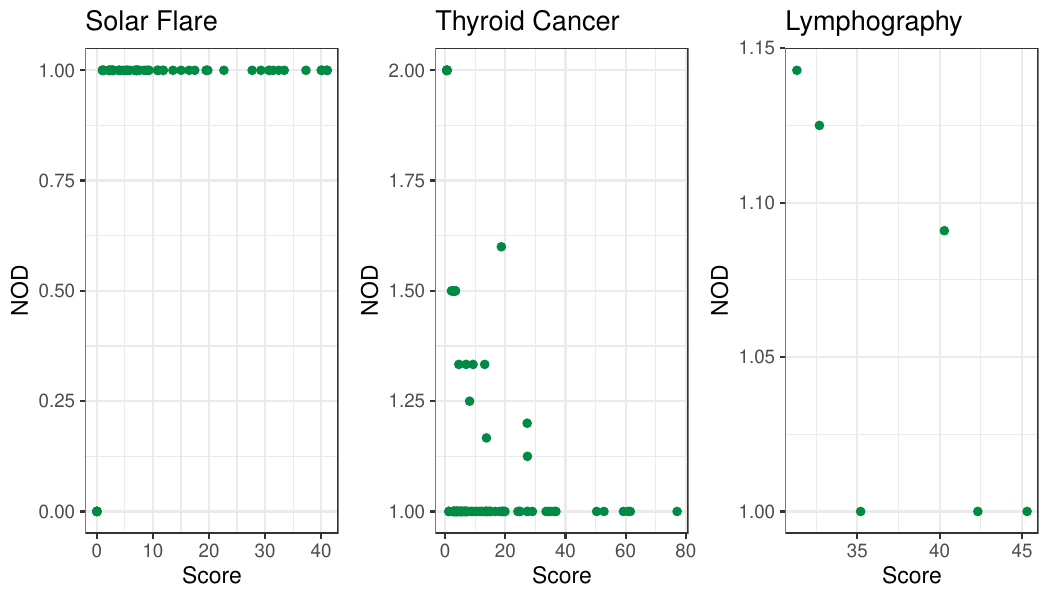}
  \caption{Nominal Outlyingness Depth (NOD) against scores of nominal outlyingness computed with SONO for the outliers of the Solar Flare, Thyroid Cancer and Lymphography data sets.}
  \label{fig:nod_sono_flare}
\end{figure}

\section{Conclusion}\label{sec:conclusion}

In this paper, we propose a novel framework for quantifying outlyingness in data sets consisting of nominal variables. We define nominal outlyingness as the degree to which a sequence of nominal values and its subsets differ from a Multinomial distribution with varying event probabilities that is assumed to have generated them. Based on this definition, one may define outliers as observations consisting of highly frequent of highly infrequent sequences and supersets or subsets of sequences, respectively.

Under the framework implied by the proposed definition of nominal outlyingness, we introduce a method for quantifying outlyingness for nominal data. Our proposal includes ideas from the association rule mining literature and can be seen as an extension of the work of \cite{koufakou2010}. We present a strategy that can be used to determine values for the hyperparameters that are involved in the calculation of the score of nominal outlyingness and provide details on support-based pruning, which seeks to reduce computational complexity. Furthermore, we define the concept of nominal outlyingness depth and the contribution matrix. These can be used to assess the source of nominal outlyingness by providing the user with the average length of highly frequent/infrequent sequences included in an observation, as well as with the extent to which each nominal variable has contributed to the score of outlyingness of an observation.

Our proposed framework is compared to two state-of-the-art algorithms for nominal outlyingness quantification, namely FPOF and FPI. The efficacy our method is illustrated via a series of simulations on 2700 synthetic data sets, where the results align with what is expected given the data generating process. In contrast, the two competing algorithms act in a completely agnostic manner, inflating the scores for observations with nominal levels that are not genuinely outlying but are set by default to appear fewer times than others. We apply our method on four publicly available data sets. Two of these data sets have not been previously used for testing outlier detection but rather for classification purposes, while the other two typically serve as benchmarking data sets for newly developed outlier detection algorithms. Our method produces comparable and in certain instances even better results to these of FPOF and FPI, under the assumption of the minority class being the class of outliers in these data sets. Moreover, our proposed framework selects suitable hyperparameter values without requiring user input, a feature that is not supported by the other two methods.

The proposed framework yields some remarkable results that can be easily interpreted (thanks to the introduction of the nominal outlyingness depth and the contribution matrix) on the publicly available data. However, it does not come with no limitations. A shortcoming of the framework presented is the way the hyperparameter $\maxlen$ is computed. For instance, if we are interested in highly infrequent itemsets and we have a large number of nominal variables with highly imbalanced level proportions, the expected number of sequences containing only the most frequent levels is typically large enough to produce sparse expected contingency tables. A direct consequence of this is that $\maxlen$ becomes very large, leading to potentially redundant calculations and therefore an increased computational cost. This is observed in some cases in our simulations, where the depth is much lower than the computed value of $\maxlen$, indicating that sequences of greater length have limited to no contribution to the obtained scores of nominal outlyingness. An amendment in the way $\maxlen$ is determined can alleviate this issue and reduce the computational complexity, without changing the way the results are interpreted and the conclusions derived. Additionally, one may wish to extend the simulation study by enforcing a dependence structure among variables which are likely to be associated in a specific manner. This way, we can get rid of the independence assumption (which is typically not valid) and potentially obtain more interesting results.

We conclude by giving some recommendations on how our framework may be used for extending the research that is currently being conducted in certain fields. The proposed framework can be used for assigning weights to observations in a data set prior to running a classification algorithm. As an example, one may perhaps want to downweigh subjects with a large score of nominal outlyingness, as these can be seen as potential outliers which may distort the classification output. This can be done by modifying the relevant loss function to account for observation weights. Similarly, one might wish to implement such an approach for clustering categorical data or deriving a version of the Trimmed K-Means \citep{cuesta1997trimmed} algorithm for nominal data. Finally, the proposed framework can contribute in the development of more reliable predictions in linear regression with nominal predictors 
%\textcolor{red}{. Extremities are commonly the reason behind biased parameter estimates and misleading predictions of a regression model; being able to} 
by diminishing the impact of potential outliers, yielding robust regression algorithms for nominal data. These recommendations can pave the way for future research opportunities and culminate in the development of novel methodologies which in turn provide insight on research questions of interest.

\backmatter

\subsubsection*{Supplementary information}

The \texttt{R} code for the simulations conducted is available on \texttt{GitHub} at \url{https://github.com/EfthymiosCosta/SONO_Sims}.

\section*{Declarations}

\subsubsection*{Conflict of interest}
The authors declare that there are no conflict of interest to disclose.

\subsubsection*{Funding}
The first author gratefully acknowledges funding provided by EPSRC's Centre for Doctoral Training in Modern Statistics and Statistical Machine Learning grant EP/S023151/1.

%%=============================================%%
%% For submissions to Nature Portfolio Journals %%
%% please use the heading ``Extended Data''.   %%
%%=============================================%%

%%=============================================================%%
%% Sample for another appendix section			       %%
%%=============================================================%%

%% \section{Example of another appendix section}\label{secA2}%
%% Appendices may be used for helpful, supporting or essential material that would otherwise 
%% clutter, break up or be distracting to the text. Appendices can consist of sections, figures, 
%% tables and equations etc.

%%===========================================================================================%%
%% If you are submitting to one of the Nature Portfolio journals, using the eJP submission   %%
%% system, please include the references within the manuscript file itself. You may do this  %%
%% by copying the reference list from your .bbl file, paste it into the main manuscript .tex %%
%% file, and delete the associated \verb+\bibliography+ commands.                            %%
%%===========================================================================================%%

\bibliography{sn-bibliography}% common bib file
%% if required, the content of .bbl file can be included here once bbl is generated
%%\input sn-article.bbl

\end{document}